\tikzset{bullet/.style={circle,fill,inner sep=2pt}}
\newtheorem{lemma}{Lemma}[section]
\newtheorem{remark}{Remark}[section]
\newtheorem{example}{Example}[section]
\def\footnoterule{\relax%
	\kern-5pt
	\hbox to \columnwidth{\hfill\vrule width .9\columnwidth height 0.4pt\hfill}
	\kern4.6pt}
\definecolor{darkblue}{rgb}{0.0,0.0,0.6}
\begin{document}

	\title{On Frequency Dependent Log-Optimal \\ Portfolio with Transaction Costs}
\date{}
	\author{Chung-Han Hsieh$^{\ast}$$\dag$\thanks{$^\ast \dag$Corresponding author: Chung-Han Hsieh.
			Email: \href{mailto: ch.hsieh@mx.nthu.edu}{ch.hsieh@mx.nthu.edu}. } and Yi-Shan Wong${\dag}$\thanks{\dag Email: \href{mailto: yishan.wong13@gmail.com}{yishan.wong13@gmail.com}. This paper was supported in part by the Ministry of Science and Technology, R.O.C. Taiwan, under Grants: MOST110--2222--E--007--005-- and MOST111--2221--E--007--124--.}\\
 Department of Quantitative Finance, \\National Tsing Hua University, Hsinchu 300044, Taiwan R.O.C.
	}
	
	\maketitle
	
\textbf{Abstract.}
		The aim of this paper is to investigate the impact of rebalancing frequency and transaction costs on the log-optimal portfolio, which is a portfolio that maximizes the \textit{expected logarithmic growth} rate of an investor's wealth.
		We prove that the frequency-dependent log-optimal portfolio problem with costs is equivalent to a concave program and provide a version of the dominance theorem with costs to determine when an  investor should invest all available funds in a particular asset. 
		Then, we show that transaction costs may cause a bankruptcy issue for the frequency-dependent log-optimal portfolio.
		To address this issue,  we approximate the problem to obtain a quadratic concave program and derive necessary and sufficient optimality conditions. 
		Additionally, we prove a version of the \textit{two-fund theorem}, which states that any convex combination of two optimal weights from the optimality conditions is still optimal.
		We test our proposed methods using both intraday and daily price data.
		Finally, we extend our empirical studies to an online trading scenario by implementing a sliding window approach. This approach enables us to solve a sequence of concave programs rather than a potentially computational complex stochastic dynamic programming problem.

	\vspace{5mm}
\textbf{Keywords:}
		Portfolio Optimization; Transaction Costs; Control and Optimization; Log-Optimal Portfolio; Rebalancing Frequency. 

		\vspace{5mm}
\textit{Classcodes:} G11, G14, C61

	\section{Introduction}\label{section: introduction}
	The takeoff point of this paper is to study the celebrated log-optimal portfolio, which calls for maximizing the Expected Logarithmic Growth~(ELG) of an investor's wealth.
	This ELG maximization idea was introduced by~\cite{Kelly_1956} and is also known as the \textit{Kelly Criterion}.\footnote{See \cite{poundstone2010fortune} for a storytelling book that brought the Kelly criterion to the attention of practical investors.}   
	Some earlier works related to ELG maximization and its possible applications in gambling and trading can be found in \cite{breiman1961optimal,thorp1975portfolio, cover1984algorithm, algoet1988asymptotic, rotando1992kelly, browne1996portfolio}.
	Many subsequent papers contributed to the ELG problem and its various ramifications.
	For example, \cite{thorp2006kelly} studied ELG problems in blackjack, sports betting, and the stock market.
	\cite{maclean2010long, maclean2011does} summarized the good and bad properties of maximizing ELG.
	\cite{maclean2016optimal} studied the risky short-run properties of the ELG criterion.
	\cite{cover2006elements, luenberger2013investment} are textbooks that contain an introductory chapter on the ELG problem.
	\cite{kim2017comparison} demonstrated the superior return of the log-optimal portfolio compared to a traditional mean-variance portfolio in the Korean stock market.
	\cite{lo2018growth} connected ELG results to population genetics and discussed testable findings using experimental evolution.
	More recently, \cite{wu2020analysis} analyzed  Kelly betting in finite repeated games.
	\cite{maclean2022kelly} studied the ELG problem in a regime-switching market framework.
	\cite{wang2022data} proposed a data-driven log-optimal portfolio via a sliding window approach.
	However, among all of these papers, the effects of \textit{rebalancing frequency} have \textit{not} been extensively considered in previous~literature.

	\subsection{Rebalancing Frequency Considerations}
	There are some existing results regarding rebalancing frequency in a log-optimal portfolio, as found in~\cite{kuhn2010analysis, das2014computing, das2015computing}, and \cite{hsieh2021necessary}. 
	Specifically, \cite{kuhn2010analysis} considered a portfolio optimization with returns following a continuous geometric Brownian motion, but only focused on two extreme cases: High-frequency trading and buy and hold.
	On the other hand,~\cite{das2014computing} and \cite{das2015computing} studied log-optimal portfolio with the constant weight~$K$ selected without regard for the frequency.
	However, when the same weight~$K$ is used to find an optimal rebalancing period, the resulting ELG levels are arguably suboptimal. 
	Lastly, in our prior work \cite{hsieh2021necessary}, we formulated a discrete-time frequency-dependent log-optimal portfolio problem and derived various optimality conditions, but we did not consider the effects of the transaction costs.

	\subsection{Transaction Costs Considerations}
	Transaction costs are known to significantly impact  trading performance in practice; see \cite{cornuejols2006optimization, bogle2017little}. 
	These costs may include execution commissions, bid-ask spreads, latency costs, and the price impact of trading.
	As a result, an optimal trading policy may no longer be optimal when the transaction costs are not zero; see \cite{cvitanic2004introduction, hsieh2018frequency}.
	Consequently, many previous papers have focused on addressing the effect of transaction costs on a portfolio optimization problem. 
	
	One of the earliest models for \textit{expected} execution costs was developed by \cite{bertsimas1998optimal}, where dynamic programming techniques are applied. 
	\cite{almgren2001optimal} studied an optimal tradeoff between expected cost and risk.
	In the meantime, \cite{Magill1976, shreve1994optimal, muthuraman2006multidimensional} have all studied proportional transaction costs in continuous-time portfolio optimization problems, but none of these studies considered the effects of rebalancing frequency. 
	Other transaction cost models can be found in the literature; e.g., \cite{lobo2007portfolio} considered fixed transaction costs in a portfolio optimization problem. 
	\cite{woodside2013portfolio} studied fixed, and V-shaped variable transaction costs in a mean-variance model.
	A recent empirical study; see \cite{ruf2020impact}, analyzed portfolios' performance in the presence of proportional transaction costs under various discrete rebalancing frequencies, constituent list size, and renewing frequency. 
	While it compared trading performance under various classical portfolios arising in stochastic portfolio theory, it does not take portfolio optimization into account.

	Following the frequency-dependent portfolio optimization framework described by~\cite{hsieh2018rebalancing, hsieh2021necessary}, this paper extends the formulation to incorporate proportional transaction costs.
	When there are nonzero proportional transaction costs, the frequency-dependent log-optimal portfolio problem would be intractable since there may be no  (i.e., because bankruptcy might occur) or only a trivial solution (i.e., zero investments) for such a problem. 
	To address this issue,   we propose an approximation approach. 
	We also derive optimality conditions for the approximate problem.
	In addition, we prove a version of the Dominance Theorem involving proportional transaction costs, which shows under what circumstances a log-optimal investor would invest all available funds.

	\subsection{Contributions of the Paper}
	In Section~\ref{sec: problem formulation}, we formulate the frequency-dependent log-optimal portfolio problem involving proportional transaction costs. 
	We prove that the problem is equivalent to a concave program (see Lemma~\ref{lemma: ELG optimization as a concave program}) and show a version of the Dominance Lemma~\ref{lemma: dominance theorem with costs} with cost considerations. 
	We also state a sufficient condition to invest all available funds in a single asset.
	Then, in Section~\ref{section: Taylor}, we investigate bankruptcy issues when there are nonzero costs, see Lemma~\ref{lemma: survival trades under costs}. 
	Then by approximating the ordinary optimization problem with a concave quadratic program, we provide necessary and sufficient optimality conditions for an approximate log-optimal portfolio; see Lemma~\ref{lemma: Optimality conditions}. 
	We further prove a version of the Two-Fund Theorem~\ref{theorem: two-fund theorem}, demonstrating that a combination of two optimal weights is still optimal.
	Finally, in Section~\ref{section: online trading with sliding window approach}, we extend our theory to online trading via a sliding window approach.

	\section{Problem Formulation}\label{sec: problem formulation}	
	This section provides some background information and the formulation for the frequency-dependent log-optimal problem with costs.
	Let $N > 1$ be a terminal stage. 
	For stage~$k=0,1,\dots, N-1$, consider an investor forming a portfolio consisting of~$m \geq 2$ assets and assume that at least one asset is riskless with a rate of return $r_f \geq 0.$ That is, if an asset is riskless, its return is deterministic and is treated as a degenerate random variable with value $X(k) = r_f$ for all~$k$ with probability one.\footnote{In practice, the actual distribution of returns may not be available to the investor, but one can always estimate it and work with the \textit{empirical} surrogate.} 
	Alternatively, if Asset~$i$ is a risky asset whose price at time~$k$ is~$S_i(k)>0$, then its per-period return is given by
	$
	X_i(k) = \frac{S_i(k+1) - S_i(k)}{S_i(k)}.
	$
	In the sequel, for risky assets, we assume that the return vectors~$
	X(k):=\left[X_1(k) \, X_2(k)\,  \cdots \,X_m(k)\right]^T
	$
	have a known distribution and have components $X_i(\cdot)$ which can be arbitrarily correlated.\footnote{Again, if the $i$th asset is riskless, then we put $X_i(k) = r_f \geq 0$ with probability one. If an investor maintains \textit{cash} in its portfolio, then this corresponds to the case~{$r_f=0.$}} 
	We also assume that these vectors are i.i.d. with components satisfying
	$
	X_{\min,i}  \leq X_i(k) \leq X_{\max,i}
	$
	with known bounds above and with~$X_{\max,i}$ being finite and~{$X_{\min,i} > -1$}.
	The latter constraint on $X_{\min, i}$ means that the loss per time step is limited to less than~$100\%$ and the price of a stock cannot drop to zero.

	\subsection{Linear Policy and Unit Simplex Constraint}
	Consistent with the literature, e.g.,~\cite{barmish2015new,hsieh2019positive,hsieh2018frequency,hsieh2018rebalancing, zhang2001stock, primbs2007portfolio}, we consider a \textit{linear policy} with a weight vector $K \in \mathbb{R}^m$.
	Let $V(k)$ be the investor's account value at stage~$k$  and the weight for Asset $i$ is given by
	$
	0 \leq K_i \leq 1
	$
	represents the fraction of the account allocated to the~{$i$th} asset for~$i=1,\dots,m$. 
	Said another way,  the policy for the~$i$th asset is of a linear form
	$
	u_i(k) := K_iV(k).
	$
	Note that the number of shares invested on the~$i$th asset is $u_i(k)/S_i(k)$.
	Since~$K_i \geq 0$, the investor is going {\it long}.
	In view of this, and given that there is at least one riskless asset available, 
	we consider the \textit{unit simplex} constraint
	\begin{align} \label{eq: unit simplex}
		K \in {\mathcal K} := \left\{K \in \mathbb{R}^{m}: K_i \geq 0 \text{ for all $i = 1,\dots,m$}, \; \sum_{i=1}^m K_i = 1 \right\}
	\end{align}
	which is classical constraint in finance; e.g., see \cite{cvitanic2004introduction, cover2006elements, luenberger2013investment,cuchiero2019cover,hsieh2021necessary}.
	With~{$K \in \mathcal K$}, we guarantee that~100\% of the account is invested. 
	
	\begin{remark} \rm
		In the finance literature, it is known that  long-only constraints like (\ref{eq: unit simplex}) can be used to mitigate the  overconcentration of weight. These constraints can also assist in containing volatility and trading performance; see \cite{jagannathan2003risk}. 
	\end{remark}

	\subsection{Frequency Dependent Account Value Dynamics with Transaction Costs} 
	Letting $n \geq 1$ be the number of steps between rebalancings,  at time~{$k=0$}, the investor begins with initial investments~{$
		u(0) = \sum_{i=1}^m u_i(0)
		$}
	with~$u_i(0):= K_i V(0)$ with $K_i$ being the $i$th component of the portfolio weight $K$ satisfying constraint $(\ref{eq: unit simplex})$. 
	It is worth mentioning that the investment level~$u_i(0)$ can be converted to the number of shares  by dividing it by the price $S_i(0)$; i.e.,  $u_i(0)/S_i(0)$.
	The investor then waits~$n$ steps in the spirit of buy and hold. When~{$k=n$}, the investment control is updated to be~{$
		u(n) = \sum_{i=1}^m K_i V(n).
		$}
	Continuing in this manner, a waiting period of $n$ stages is enforced between each rebalance.

	To incorporate the transaction costs into the frequency-dependent framework, let $c_i \in [0, c_{\max}]$  be a \textit{percentage transaction costs} imposed on Asset~$i$ where $c_{\max} \in (0,1)$ is a predetermined maximum transaction cost.\footnote{Nowadays, while some online brokerage services offer fee-free trades  for certain exchange-traded funds~(ETFs) in the United States, transaction costs are typically required. 
		For example, trading on the Taiwan Stock Exchange typically incurs a transaction cost of $\alpha \cdot 0.1425\%$ of the trade value for some $\alpha \in (0,1)$. As a second example, using professional broker services such as Interactive Brokers Pro., may incur a fee of $\$0.005$ per share, with a minimum fee of \$1 dollar and a maximum fee of $1\%$ of the trade value. } 
	That is, at stage $k=0$, if one invests $u_i(0)$ at Asset~$i$, then the associated transaction costs in dollar is~$u_i(0) c_i \geq 0$.
	Then the dynamics of account value at stage~$n \geq 1$ is characterized by the following stochastic recursive equation:\footnote{At stage $\ell \geq 0$ and rebalancing period $n \geq 1$, the stochastic recursion of account value becomes
		$$
		V(n (\ell+1)) 
		= V(n \ell) + \sum_{i=1}^m \frac{u_i(n \ell)}{S_i(n \ell)} (S_i(n (\ell+1) ) - S_i(n \ell))- \sum_{i=1}^m u_i(n \ell)c_i.
		$$}
	\begin{align} \label{eq: account value in terms of stocks prices}
		V(n) 
		&= V(0) + \sum_{i=1}^m \frac{u_i(0)}{S_i(0)} (S_i(n) - S_i(0))- \sum_{i=1}^m u_i(0)c_i .
	\end{align}
	In the sequel, we may sometimes write $V_K(n)$ instead of~$V(n)$ to emphasize the dependence on portfolio weight $K$.

	\begin{remark}[Transaction Costs]\rm
		If there are no transaction costs; i.e., $c_i :=0$ for all~$i=1,2,\dots,m$, then the account value dynamics~(\ref{eq: account value in terms of stocks prices}) reduces to the existing formulation in \cite{rujeerapaiboon2018risk, hsieh2018rebalancing}; see also the frictionless market setting discussed in~\cite{merton1992continuous}.
	\end{remark}

	\subsection[Frequency-Dependent Optimization Problem]{Frequency-Dependent Optimization Problem} 
	Following previous research in~\cite{hsieh2018rebalancing, hsieh2021necessary}, to study the performance which is dependent on rebalancing frequency, for~$i=1, 2, \ldots, m$, we work with the  $n$-period \textit{compound~returns} for each asset $i$, call it~$\mathcal{X}_{n,i}$, defined as
	\[
	\mathcal{X}_{n,i} = \frac{S_i(n) - S_i(0)}{S_i(0)}.
	\]
	It is readily verified that
	$
	\mathcal{X}_{n,i} = \prod_{k=0}^{n-1} (1+X_i(k)) -1
	$
	and
	$
	-1 < \mathcal{X}_{\min,i} \leq \mathcal{X}_{n,i} \leq \mathcal{X}_{\max, i} 
	$
	where $\mathcal{X}_{\max,i} := (1+X_{\max,i})^n -1 $ and $\mathcal{X}_{\min,i}:= (1 + X_{\min, i})^n -1 >-1 $ for all $n\geq 1.$
	In the sequel, we work with the random vector~$\mathcal{X}_n$ having $i$th component~$\mathcal{X}_{n,i}$.  
	
	Now for any rebalancing period~$n \geq 1$, we define the expected logarithmic growth (ELG) 
	\begin{align*}
		{g_n}(K) 
		&:=  \frac{1}{n}\mathbb{E}\left[ \log \frac{V_K(n)}{V(0)} \right].
	\end{align*}
	Our goal is to solve the following frequency-dependent stochastic maximization problem:
	\begin{align} \label{problem: stochastic log-optimization problem}
		&\sup \; \{g_n(K): K \in \mathcal{K}\}  \\
		&{\rm s.t.}  \;\;\; V(n) = V(0) + \sum_{i=1}^m \frac{u_i(0)}{S_i(0)} (S_i(n) - S_i(0))- \sum_{i=1}^m u_i(0)c_i \nonumber 
	\end{align}
	where $\cal K$ is the unit simplex defined previously in Equation~(\ref{eq: unit simplex}).
	The following lemma shows that maximizing the frequency-dependent ELG with nonzero costs is indeed solving a concave program.
	
	\begin{lemma}[ELG Optimization as a Concave Program] \label{lemma: ELG optimization as a concave program}
		Fix $n \geq 1$ and $c_i \in (0,1)$.
		The frequency-dependent ELG optimization problem~(\ref{problem: stochastic log-optimization problem}) 
		is equivalent to
		\begin{align} \label{problem: simplified log-optimization problem}
			\max \left\{ \frac{1}{n}\mathbb{E}\left[ {\log (1 + {K^T}{ \widetilde{\mathcal{X}}_n})} \right]: K \in \mathcal{K} \right\}. 
		\end{align}
		where $\widetilde{\mathcal{X}}_n$ is a vector with the $i$th component given by
		$
		\widetilde{\mathcal{X}}_{n,i} := \mathcal{X}_{n,i} - c_i.
		$
		Additionally,  Problem~(\ref{problem: simplified log-optimization problem}) is a concave program.
	\end{lemma}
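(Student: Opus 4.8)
The plan is to reduce the stochastic problem~(\ref{problem: stochastic log-optimization problem}) to a simple algebraic substitution inside the expectation, and then invoke standard convexity facts. First I would rewrite the account value dynamics appearing in the constraint. Using the linear policy $u_i(0) = K_i V(0)$ together with the definition of the $n$-period compound return $\mathcal{X}_{n,i} = (S_i(n) - S_i(0))/S_i(0)$, each term in the constraint becomes $\frac{u_i(0)}{S_i(0)}(S_i(n) - S_i(0)) = K_i V(0)\mathcal{X}_{n,i}$ while the cost term is $u_i(0)c_i = K_i V(0) c_i$. Substituting and factoring out $V(0)$ yields
\begin{align*}
\frac{V_K(n)}{V(0)} = 1 + \sum_{i=1}^m K_i(\mathcal{X}_{n,i} - c_i) = 1 + K^T \widetilde{\mathcal{X}}_n,
\end{align*}
which is exactly the argument appearing in~(\ref{problem: simplified log-optimization problem}). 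Taking logarithms and then the expectation gives $g_n(K) = \frac{1}{n}\mathbb{E}[\log(1 + K^T\widetilde{\mathcal{X}}_n)]$, so the two problems share the \emph{same} objective over the \emph{same} feasible set $\mathcal{K}$; hence their optimal values and optimizers coincide, establishing the claimed equivalence.

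To show that~(\ref{problem: simplified log-optimization problem}) is a concave program, I would argue pointwise in the realization of $\widetilde{\mathcal{X}}_n$. For each fixed outcome, the map $K \mapsto 1 + K^T\widetilde{\mathcal{X}}_n$ is affine, and $\log$ is concave on its domain; hence the composition $K \mapsto \log(1 + K^T\widetilde{\mathcal{X}}_n)$ is concave in $K$. Concavity is preserved under the nonnegative averaging performed by the expectation operator and under scaling by the positive constant $1/n$, so the objective is concave in $K$. Since the unit simplex $\mathcal{K}$ defined in~(\ref{eq: unit simplex}) is convex (indeed compact), maximizing a concave objective over $\mathcal{K}$ is a concave program.

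The main obstacle is the domain of the logarithm: with nonzero costs the argument $1 + K^T\widetilde{\mathcal{X}}_n$ need not stay positive for every $K \in \mathcal{K}$, since $\widetilde{\mathcal{X}}_{n,i} = \mathcal{X}_{n,i} - c_i$ can be pushed below the frictionless lower bound $\mathcal{X}_{\min,i} > -1$. I would handle this by adopting the extended-real-valued convention $\log(x) := -\infty$ for $x \le 0$, under which the extended logarithm remains concave, so the composition, expectation, and scaling arguments go through verbatim; the associated effective domain $\{K \in \mathcal{K} : 1 + K^T\widetilde{\mathcal{X}}_n > 0 \text{ a.s.}\}$ is itself convex, being the intersection of $\mathcal{K}$ with affine half-space constraints. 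This preserves the concave-program structure while isolating precisely the positivity (survival) condition that the later bankruptcy analysis in Lemma~\ref{lemma: survival trades under costs} quantifies.
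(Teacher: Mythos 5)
Your proposal is correct and follows essentially the same route as the paper: the identical algebraic substitution $u_i(0)=K_iV(0)$ reducing the dynamics to $V_K(n)=(1+K^T\widetilde{\mathcal{X}}_n)V(0)$, followed by the standard affine-composed-with-$\log$, expectation-preserves-concavity argument over the convex compact simplex. One point of difference worth noting: the paper simply asserts that $g_n(K)$ is continuous on $\mathcal{K}$ and invokes Weierstrass to replace $\sup$ by $\max$, silently assuming $1+K^T\widetilde{\mathcal{X}}_n>0$ almost surely, whereas your extended-real-valued convention $\log(x):=-\infty$ for $x\le 0$ explicitly handles the possible failure of positivity under nonzero costs --- an issue the paper only acknowledges later, in Section~\ref{section: Taylor}, as the motivation for the quadratic approximation. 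Your treatment is arguably the more careful one; just make sure to note that attainment of the maximum still holds because the extended objective is upper semicontinuous on the compact set $\mathcal{K}$ and is finite at, e.g., $K=e_j$ for the riskless asset (since $(1+r_f)^n-c_j>0$), so the $\sup$-to-$\max$ replacement in the lemma statement remains justified.
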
	
	
	\begin{proof}
		We begin by observing that the account value dynamics 
		\begin{align}
			V(n) 
			&= V(0) + \sum_{i=1}^m \frac{u_i(0)}{S_i(0)} (S_i(n) - S_i(0))- \sum_{i=1}^m u_i(0)c_i \nonumber \\ 
			&= V(0) + \sum_{i=1}^m K_i V(0) \left( \frac{S_i(n) - S_i(0)}{S_i(0)} \right)- \sum_{i=1}^m u_i(0)c_i \nonumber \\ 
			&= V(0) + \sum_{i=1}^m K_i V(0) \mathcal{X}_{n,i}- \sum_{i=1}^m u_i(0)c_i \nonumber \\ 
			&= (1 + K^T\mathcal{X}_n) V(0) - \sum_{i=1}^m u_i(0)c_i \nonumber \\
			&= (1 + K^T\widetilde{\mathcal{X}}_n ) V(0) \label{eq: account value}
		\end{align}
		where $\widetilde{\mathcal{X}}_n$ is a vector with the $i$th component given by
		$
		\widetilde{\mathcal{X}}_{n,i} := \mathcal{X}_{n,i} - c_i
		$.
		Hence, it follows that
		\begin{align*}
			{g_n}(K) 
			&=  \frac{1}{n}\mathbb{E}\left[ \log \frac{V_K(n)}{V(0)} \right] 
			= \frac{1}{n}\mathbb{E}\left[ {\log (1 + {K^T}{ \widetilde{\mathcal{X}}_n})} \right].
		\end{align*}
		Therefore, the original Problem~(\ref{problem: stochastic log-optimization problem}) reduces to
		$
		\max \left\{g_n(K) = \frac{1}{n}\mathbb{E}\left[ {\log (1 + {K^T}{ \widetilde{\mathcal{X}}_n})} \right] : K \in \mathcal{K}\right\}.
		$
		The supremum operator is replaced by the maximum since $g_n(K)$ is continuous in $K$ over a compact domain $\cal K$.  Hence, the Weierstrass extremum theorem; see \cite{rudin1964principles}, guarantees that the maximum is attained.
		To complete the proof, it remains to show that Problem~(\ref{problem: simplified log-optimization problem}) is a concave program. This is accomplished by a standard convexity argument.
		Since $1+K^T\widetilde{ \mathcal{X} }_n$ is affine in $K$, taking the logarithm function yields a concave function. 
		Moreover, taking the expectation and multiplying a scaling factor $1/n$ preserve the concavity; see \cite{boyd2004convex}. 
		Therefore, the objective function $\frac{1}{n}\mathbb{E}\left[ {\log (1 + {K^T}{ \widetilde{\mathcal{X}}_n})} \right]$ is a concave function in $K$. On the other hand, $\cal K$ is a unit simplex which is a convex compact set. Therefore, the maximization considered in Problem~(\ref{problem: simplified log-optimization problem}) is a concave function over a convex compact set, hence, is a concave program.
	\end{proof}

	Henceforth, we denote  $g_n^*$ as the  optimal expected logarithmic~growth associated with the given rebalancing period of length $n$. 
	A vector~{$K^* \in \mathcal{K} \subset \mathbb{R}^m$} satisfying~$g_n(K^*) = g_n^*$ is called a~\textit{log-optimal weight}. 
	The portfolio that uses the log-optimal fraction vector is called \textit{frequency-dependent log-optimal portfolio}.

	\subsection{Dominance Lemma with Costs} \label{section: Dominance lemma with Costs}
	In this section, a version of the dominance lemma with costs is stated below.
	
	\begin{lemma}[Dominance] \label{lemma: dominance theorem with costs}
		Given a collection of $m\geq 2$ assets, if Asset $j$ satisfying
		$$
		\mathbb{E} \left[ \frac{ 1+\widetilde{\mathcal{X}}_{n,i}}{ 1+ \widetilde{\mathcal{X}}_{n,j}} \right] \leq 1,
		$$ for all $i \neq j$ with $i,j \in \{1,2,\dots,m\}$,
		then, for all $n\geq 1$, $g_n(K)$ is maximized by
		$
		K^* = e_j
		$
		where~$e_j$ is the unit vector in the $j$th coordinate direction. 
	\end{lemma}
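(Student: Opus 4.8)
The plan is to exploit the concavity established in Lemma~\ref{lemma: ELG optimization as a concave program} and characterize the maximizer through a first-order (variational) condition, rather than attempting a direct comparison of $g_n(e_j)$ against $g_n(K)$ for every competing $K$. Since $g_n$ is concave on the simplex $\mathcal K$ and, as I will argue, continuously differentiable near each vertex, a point $K^\ast \in \mathcal K$ is a global maximizer if and only if $\nabla g_n(K^\ast)^T (K - K^\ast) \le 0$ for every $K \in \mathcal K$. My goal is to show that $K^\ast = e_j$ satisfies this inequality precisely under the stated hypothesis.

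First I would compute the gradient of $g_n$ by differentiating under the expectation, obtaining
\[
\frac{\partial g_n}{\partial K_i}(K) = \frac{1}{n}\,\mathbb{E}\!\left[\frac{\widetilde{\mathcal{X}}_{n,i}}{1 + K^T \widetilde{\mathcal{X}}_n}\right].
\]
Evaluating at $K^\ast = e_j$, where $K^{\ast T}\widetilde{\mathcal{X}}_n = \widetilde{\mathcal{X}}_{n,j}$, yields $\partial_i g_n(e_j) = \tfrac1n \mathbb{E}[\widetilde{\mathcal{X}}_{n,i}/(1+\widetilde{\mathcal{X}}_{n,j})]$. Next, because $\mathcal K$ is a polytope with vertices $e_1,\dots,e_m$ and the map $K \mapsto \nabla g_n(e_j)^T(K-e_j)$ is affine, its maximum over $\mathcal K$ is attained at a vertex; hence the variational inequality holds for all $K \in \mathcal K$ iff it holds at each vertex $e_i$, i.e. iff $\partial_i g_n(e_j) \le \partial_j g_n(e_j)$ for every $i \ne j$.

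I would then rewrite each numerator as $\widetilde{\mathcal{X}}_{n,\ell} = (1+\widetilde{\mathcal{X}}_{n,\ell}) - 1$, so that both $\partial_i g_n(e_j)$ and $\partial_j g_n(e_j)$ share the common term $-\tfrac1n\mathbb{E}[1/(1+\widetilde{\mathcal{X}}_{n,j})]$; after cancellation the required inequality collapses exactly to $\mathbb{E}[(1+\widetilde{\mathcal{X}}_{n,i})/(1+\widetilde{\mathcal{X}}_{n,j})] \le 1$, which is the hypothesis. This shows $e_j$ meets the first-order condition, and concavity upgrades it to global optimality, giving $K^\ast = e_j$.

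The main obstacle I anticipate is justifying the interchange of differentiation and expectation and, relatedly, guaranteeing the gradient is well defined at $e_j$: this requires $1 + \widetilde{\mathcal{X}}_{n,j} > 0$ almost surely so that the integrand and its derivative remain integrable. I would handle this using the bounded-support assumption $\widetilde{\mathcal{X}}_{n,i} \in [\mathcal{X}_{\min,i}-c_i,\, \mathcal{X}_{\max,i}-c_i]$, which bounds the integrands uniformly and lets dominated convergence license the differentiation, provided the no-bankruptcy condition $1+\widetilde{\mathcal{X}}_{n,j}>0$ holds (cf. the bankruptcy discussion and Lemma~\ref{lemma: survival trades under costs}). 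Once integrability is secured, the remaining steps reduce to the routine algebraic cancellation described above.
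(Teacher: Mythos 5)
Your proposal is correct, but it takes a genuinely different route from the paper. The paper compares $g_n(K)$ to $g_n(e_j)$ directly: writing $\widetilde{\mathcal{R}}_n := \widetilde{\mathcal{X}}_n + \mathbf{1}$ and using $K^T\mathbf{1}=1$, it bounds $g_n(K)-g_n(e_j) = \frac{1}{n}\mathbb{E}\bigl[\log\bigl(K^T\widetilde{\mathcal{R}}_n/\widetilde{\mathcal{R}}_{n,j}\bigr)\bigr]$ from above by Jensen's inequality ($\mathbb{E}\log\le\log\mathbb{E}$), expands the resulting expectation as $\sum_i K_i\,\mathbb{E}[\widetilde{\mathcal{R}}_{n,i}/\widetilde{\mathcal{R}}_{n,j}]\le\sum_i K_i=1$, and concludes. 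You instead verify the first-order variational inequality $\nabla g_n(e_j)^T(K-e_j)\le 0$ at the vertex $e_j$ and invoke concavity to upgrade it to global optimality; your algebra (writing $\widetilde{\mathcal{X}}_{n,\ell}=(1+\widetilde{\mathcal{X}}_{n,\ell})-1$ so the common term $\mathbb{E}[1/(1+\widetilde{\mathcal{X}}_{n,j})]$ cancels) is right, and the reduction to vertices via affineness of $K\mapsto\nabla g_n(e_j)^T(K-e_j)$ is valid. The paper's Jensen argument is more elementary and sidesteps any differentiation under the expectation; it needs no regularity beyond the expectations in the hypothesis being well defined. Your argument pays the extra cost of justifying the interchange of derivative and expectation and requires $1+\widetilde{\mathcal{X}}_{n,j}$ bounded away from zero almost surely (which, as you note, follows from the bounded-support assumption together with the survival condition of Lemma~\ref{lemma: survival trades under costs}; the paper's proof also implicitly needs $\widetilde{\mathcal{R}}_{n,j}>0$ for $g_n(e_j)$ to be finite, so this is not a real disadvantage). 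In exchange, your route buys something the paper's does not: it exhibits the dominance condition as \emph{exactly} the first-order optimality condition at $e_j$, so under your differentiability hypotheses the condition is necessary as well as sufficient for $e_j$ to be log-optimal, and it connects Lemma~\ref{lemma: dominance theorem with costs} structurally to the KKT analysis of Lemma~\ref{lemma: Optimality conditions}.
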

	
	\begin{proof}
		To prove $K^* = e_j$, it suffices to show that $g_n(K) \leq g_n(e_j)$ for $K \in \mathcal{K}$. 
		For notational convenience, we work with the random vector $\widetilde{\mathcal{R}}_n := \widetilde{\mathcal{X}}_n + \textbf{1}$ where $\textbf{1} := [1 \; 1 \; \cdots \; 1]^T \in \mathbb{R}^m.$ 
		Since~$K^T\textbf{1} = 1$ for $K \in \mathcal{K}$, it follows that
		$
		g_n(K) = \frac{1}{n} \mathbb{E}[\log K^T\widetilde{\mathcal{R}}_n].
		$
		Hence, by applying Jensen's inequality to the concave logarithmic function, we obtain
		\begin{align*}
			g_n(K) - g_n(e_j) 
			& = \frac{1}{n} \mathbb{E} \left[ \log \frac{K^T \widetilde{\mathcal{R}}_n}{\widetilde{\mathcal{R}}_{n,j}}\right] \\
			& \leq \frac{1}{n} \log \mathbb{E}\left[ \frac{K^T \widetilde{\mathcal{R}}_n}{\widetilde{\mathcal{R}}_{n,j}} \right]\\
			&= \frac{1}{n} \log \left( \sum_{i=1}^m K_i \mathbb{E} \left[ \frac{ \widetilde{\mathcal{R}}_{n,i}}{\widetilde{\mathcal{R}}_{n,j}} \right] \right) \\
			&= \frac{1}{n} \log \left( \sum_{i=1}^m K_i \mathbb{E} \left[ \frac{ 1+\widetilde{\mathcal{X}}_{n,i}}{ 1+ \widetilde{\mathcal{X}}_{n,j}} \right] \right) \\
			&\leq \frac{1}{n} \log \left( \sum_{i=1}^m K_i \cdot 1 \right)\\
			& \leq \frac{1}{n} \log 1 = 0
		\end{align*}
		where the second last inequality holds since~$\mathbb{E} \left[ \frac{ 1+\widetilde{\mathcal{X}}_{n,i}}{ 1+ \widetilde{\mathcal{X}}_{n,j}} \right] \leq 1$ and the last inequality holds since~$\sum_{i=1}^mK_i = 1$. Therefore,
		$
		g_n(K)  \leq g_n(e_j). 
		$
	\end{proof}	
	
	\begin{remark} \rm
		Lemma~\ref{lemma: dominance theorem with costs} indicates that, under certain conditions,  an optimal log-optimal investor must invest all available funds in a specific asset when transaction costs are present.
		This result can be viewed as an extension of the Dominant Asset Theorem in \cite{hsieh2021necessary} to include transaction costs. 
		To see this, consider the case where there are no costs; i.e., $c_i = 0$ for all $i$, then $\widetilde{ \mathcal{X} }_{n,i} = \mathcal{X}_{n,i}$. This implies that the ratio
		\begin{align*}
			\mathbb{E} \left[ \frac{1+\widetilde{ \mathcal{X} }_{n,i}}{ 1 + \widetilde{\mathcal{X}}_{n,j}} \right] 
			& = \mathbb{E} \left[ \prod_{k=0}^{n-1} \frac{1+X_i(k) }{ 1+X_j(k)} \right] \\
			& = \left( \mathbb{E} \left[ \frac{1+X_i(0) }{ 1+X_j(0)} \right] \right)^n,
		\end{align*}
		where the last equality holds since $X_i(k)$ are i.i.d. in $k$.
		Thus, the condition $	\mathbb{E} \left[ \frac{ 1+\widetilde{\mathcal{X}}_{n,i}}{ 1+ \widetilde{\mathcal{X}}_{n,j}} \right] \leq 1
		$ reduces to a much simpler condition $\mathbb{E} \left[ \frac{1+X_i(0) }{ 1+X_j(0)} \right] \leq 1$, which is consistent with the Dominant Asset Theorem proved in \cite{hsieh2021necessary}.
	\end{remark}

	\section{An Approximate  Log-Optimal Portfolio Problem with Costs}\label{section: Taylor}
	When the transaction costs are present, the corresponding \textit{fee-adjusted return} is given by $ \widetilde{\mathcal{X}}_{n,i} = \mathcal{X}_{n,i} - c_i $.
	The next lemma provides a sufficient condition for ensuring that the trades survive up to stage $n$.

	\begin{lemma}[Probability of Having Survival Trades under Transaction Costs] \label{lemma: survival trades under costs}
		Fix $n \geq 1.$
		If ${X}_{\min,i} > c_i^{1/n}-1$ for all $i=1,2,\dots,m$, then the probability $P( V(n) > 0) =1.$
	\end{lemma}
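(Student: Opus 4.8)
The plan is to reduce this probabilistic statement to a deterministic positivity check on the fee-adjusted compound returns. The natural starting point is the closed form for the account value established in the proof of Lemma~\ref{lemma: ELG optimization as a concave program}, namely $V(n) = (1 + K^T \widetilde{\mathcal{X}}_n)\, V(0)$. Since the initial wealth satisfies $V(0) > 0$, the event $\{V(n) > 0\}$ coincides with the event $\{1 + K^T \widetilde{\mathcal{X}}_n > 0\}$, so it suffices to prove that the latter holds almost surely for the admissible weight $K \in \mathcal{K}$.

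Next I would exploit the unit-simplex constraint to rewrite the quantity of interest. Because $\sum_{i=1}^m K_i = 1$, I can absorb the constant $1$ into the sum and obtain
\begin{align*}
1 + K^T \widetilde{\mathcal{X}}_n = \sum_{i=1}^m K_i\bigl(1 + \mathcal{X}_{n,i} - c_i\bigr),
\end{align*}
which is a convex combination of the terms $1 + \mathcal{X}_{n,i} - c_i$ with nonnegative weights $K_i$ summing to one. Consequently the whole expression is strictly positive as soon as every summand is positive, and it is enough to show $1 + \mathcal{X}_{n,i} - c_i > 0$ almost surely for each $i = 1,\dots,m$.

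For the final step I would invoke the deterministic lower bound $\mathcal{X}_{n,i} \geq \mathcal{X}_{\min,i} = (1 + X_{\min,i})^n - 1$ recorded earlier, which holds with probability one. Because the standing assumption $X_{\min,i} > -1$ guarantees $1 + X_{\min,i} > 0$, the map $t \mapsto t^n$ is strictly increasing on the positive axis, so the hypothesis $X_{\min,i} > c_i^{1/n} - 1$ is equivalent to $(1 + X_{\min,i})^n > c_i$, i.e. $1 + \mathcal{X}_{\min,i} - c_i > 0$. Combining this with the lower bound gives $1 + \mathcal{X}_{n,i} - c_i \geq 1 + \mathcal{X}_{\min,i} - c_i > 0$ almost surely, and the convex-combination argument of the previous step then yields $1 + K^T \widetilde{\mathcal{X}}_n > 0$ with probability one, hence $P(V(n) > 0) = 1$.

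The only delicate point is the passage through the $n$th root when converting the hypothesis into the form $(1 + X_{\min,i})^n > c_i$: one must verify that $1 + X_{\min,i}$ is strictly positive before raising to the power $n$ and taking roots, which is precisely what the standing assumption $X_{\min,i} > -1$ supplies. Everything else is a routine rewriting based on the simplex constraint and the monotonicity of the compound-return bounds, so I expect no substantive obstacle beyond keeping the inequality directions straight.
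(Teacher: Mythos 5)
Your proposal is correct and follows essentially the same route as the paper: reduce $\{V(n)>0\}$ to $\{1+K^T\widetilde{\mathcal{X}}_n>0\}$ via the simplex constraint, and then show each term $\prod_{k=0}^{n-1}(1+X_i(k))-c_i$ exceeds zero using the bound $(1+X_{\min,i})^n>c_i$ implied by the hypothesis. Your phrasing as a deterministic convex-combination positivity is a marginally cleaner packaging of the paper's event-inclusion-and-monotonicity step, but the substance is identical.
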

	
	\begin{proof} 
		Let $n\geq 1$ be given.
		Observe that
		\begin{align}
			P(V(n) > 0)  
			&=  P( (1+K^T \widetilde{\mathcal{X}}_n)V(0)>0) \nonumber \\
			&=  P( 1+K^T \widetilde{\mathcal{X}}_n>0) \nonumber \\
			&=  P\left( \sum_{i=1}^m K_i( 1+  \widetilde{\mathcal{X}}_{n,i}) >0 \right)\nonumber \\
			&=  P\left( \sum_{i=1}^m K_i\left( 1+  \prod_{k=0}^{n-1} (1+ X_i(k))  - 1 - c_i \right) >0 \right)\nonumber \\
			&=  P\left( \sum_{i=1}^m K_i\left(   \prod_{k=0}^{n-1} (1+ X_i(k))  - c_i \right) >0 \right).  \label{eq: prob of survival trades}
		\end{align}
		where the third equality holds by invoking the fact that $\sum_{i=1}^m K_i = 1.$
		Now note that the event
		$
		\left\{ \sum_{i=1}^m K_i\left(   \prod_{k=0}^{n-1} (1+ X_i(k))  - c_i  \right) >0 \right\} \supseteq \left\{  \prod_{k=0}^{n-1} (1+ X_i(k))  >  c_i  \right\}.
		$
		With the aids of monotonicity of probability measure,  Equality~(\ref{eq: prob of survival trades}) becomes
		\begin{align*}
			P(V(n) > 0)  
			&\geq P\left( \prod_{k=0}^{n-1} (1+ X_i(k))  > c_i   \right).
		\end{align*}
		Since $\prod_{k=0}^{n-1} (1+X_i(k)) \geq (1+X_{\min,i})^n$ for all $i=1,2,\dots,m$ and ${X}_{\min,i} > c_i^{1/n}-1$ for all $i=1,2,\dots,m$, it follows that
		$
		\prod_{k=0}^{n-1} (1+X_i(k)) > c_i
		$ for all $i$.
		Therefore, we have
		$
		P(V(n)>0) = 1.
		$
	\end{proof}
	
	\begin{remark}\rm
		$(i)$  To assure a survival trade,  Lemma~\ref{lemma: survival trades under costs} indicates that the worst returns must be large enough. 
		Specifically, for $n =1$, it requires $X_{\min,i} > c_i -1$ for all $i$. On the other hand, if~$n \to \infty,$ which corresponds to buy and hold, then we must have $X_{\min, i} > 0$ for all $i$.  
		$(ii)$ On the converse of the Lemma~\ref{lemma: survival trades under costs}, it is readily verified that if $\min_i c_i>0$ and~$P(V(n)>0) = 1$ for~$n \geq 1$, then  $\sum_{i=1}^m K_i ((1+ \mu_i)^n - c_i) \geq 0$ where $\mu_i := \mathbb{E}[X_i(k)]$; see Lemma~\ref{lemma: second surivial lemma}. 
		This reveals a gap in obtaining a necessary condition for survival trades in Lemma~\ref{lemma: survival trades under costs}.
	\end{remark}

	Lemma~\ref{lemma: survival trades under costs} implies that for a fixed $c^*:=\max_i c_i \in (0,1)$, there exists $X_{\min,i} > -1$ such that~$V(n) \leq 0$ with positive probability.
	Said another way, the investor's account may experience a ``survival issue"  when the rebalancing frequency and costs are taken into  consideration. 
	In addition, this survival issue may cause the $g_n(K)$ to become ill-defined.
	To address this issue, we use a Taylor-based quadratic approximation of $g_n(K)$ around $K = \textbf{0}$; see~\cite{casella2021statistical} and write
	\begin{align} \label{eq: approximate expected log growth}
		g_n(K)
		&\approx \frac{1}{n} \left( K^T \mathbb{E}\left[ \widetilde{\mathcal{X}}_n \right] 
		- \frac{1}{2}K^T\mathbb{E}\left[ \widetilde{\mathcal{X}}_n \widetilde{\mathcal{X}}_n^T \right] K \right):=\widehat{g}_n(K).
	\end{align}
	
	It is well-known that such a quadratic approximation is accurate for small returns; see \cite{pulley1983mean}.\footnote{Without loss of generality, set $c_i:=0$ for all $i$. Then the Taylor expansion of $\mathbb{E}\left[ \log(1+K^T\mathcal{X}_n) \right]  = \mathbb{E}\left[ \sum_{d=1}^\infty (-1)^{d+1} \frac{ (K^T\mathcal{X}_n)^d}{d} \right]$ converges for all $K \in \mathcal{K}$ if $|K^T \mathcal{X}_n| \leq 1 $ with probability one. 
	}
	Hence, in the sequel, we consider an \textit{approximate} frequency-dependent log-optimal portfolio problem with costs as follows:
	\begin{align}
		\max  \left\{\widehat{g}_n(K): K \in \mathcal{K} \right\}. \label{problem: approximate log-optimal portfolio problem}
	\end{align}
	
	\begin{remark} \rm
		It is readily verified that	the approximate problem~(\ref{problem: approximate log-optimal portfolio problem}) described above is a \textit{concave quadratic program}, which enables us to solve it in an efficient manner; e.g., see \cite{diamond2016cvxpy}.
	\end{remark}
	
	\subsection{Optimality Conditions}
	In this section, we investigate the optimality conditions for the approximate frequency-dependent log-optimal problem~(\ref{problem: approximate log-optimal portfolio problem}).
	
	\begin{lemmarep}[Necessity and Sufficiency] \label{lemma: Optimality conditions}
		Fix $n \geq 1.$
		Given a percentage costs $c_i \in (0,1)$ for~$i=1,2,\ldots,m$, the portfolio weight $\widehat{K}^{*} \in \mathcal{K}$ is optimal to the approximate frequency-dependent log-optimal problem~(\ref{problem: approximate log-optimal portfolio problem}), if and only if 
		\begin{align} \label{eq:condition 1}
			\mathbb{E}\left[\widetilde{\mathcal{X}}_{n,i}\right] 
			-\sum_{j=1}^m \widehat{K}_j^* \mathbb{E}\left[\widetilde{\mathcal{X}}_{n,i} \widetilde{\mathcal{X}}_{n,j} \right] 
			&= \widehat{K}^{*T} \mathbb{E}\left[ \widetilde{\mathcal{X}}_{n} \right]
			- \widehat{K}^{*T} \mathbb{E}\left[ \widetilde{\mathcal{X}}_n \widetilde{\mathcal{X}}_n^T \right] \widehat{K}^*,
			\text{ if $\widehat{K}_i^* > 0$ }\\
			\mathbb{E}\left[\widetilde{\mathcal{X}}_{n,i}\right] 
			-\sum_{j=1}^m \widehat{K}_j^* \mathbb{E}\left[\widetilde{\mathcal{X}}_{n,i} \widetilde{\mathcal{X}}_{n,j} \right] 
			&\leq \widehat{K}^{*T} \mathbb{E}\left[ \widetilde{\mathcal{X}}_{n} \right]
			- \widehat{K}^{*T} \mathbb{E}\left[ \widetilde{\mathcal{X}}_n \widetilde{\mathcal{X}}_n^T \right] \widehat{K}^*,
			\text{ if $\widehat{K}_i^* = 0$ }\label{eq:condition 2}
		\end{align}
	\end{lemmarep}
	
	\begin{proof} 
		Let $n \geq 1$ and $c_i \in (0,1)$ for all $i$ be given.
		We begin by considering an equivalent constrained stochastic minimization problem described as follows:
		\begin{align*}
			&\min_{K} -K^T \mathbb{E}\left[ \widetilde{\mathcal{X}}_n \right] 
			+ \frac{1}{2}K^T\mathbb{E}\left[ \widetilde{\mathcal{X}}_n \widetilde{\mathcal{X}}_n^T \right] K\\
			&{\rm s. t.} \; K^T \mathbf{1} - 1 = 0;\\
			&\;\;\;\;\; -K^T e_i \leq 0, \; i=1,2, \ldots,m
		\end{align*}
		where $e_i \in \mathbb{R}^m$ is unit vector having one at the $i$th component and zeros on the other components. 
		Consider the Lagrangian 
		\begin{align*}
			\mathcal{L}(K, \lambda, \mu) &:= -K^T \mathbb{E}\left[ \widetilde{\mathcal{X}}_n \right]
			+ \frac{1}{2}K^T\mathbb{E}\left[ \widetilde{\mathcal{X}}_n \widetilde{\mathcal{X}}_n^T  \right]K +\lambda(K^T \mathbf{1}-1) 
			- \mu^{T}K.
		\end{align*}
		By the Karush-Kuhn-Tucker (KKT) conditions; e.g., see~\cite[Chapter~5]{boyd2004convex}, if $\widehat{K}^{*}$ is a local maximum then there is a scalar $\lambda \in \mathbb{R}^1$ and a vector $\mu \in \mathbb{R}^m$ with component~$\mu_j \geq 0$ such that, for $i=1,2, \ldots,m$,
		\begin{align}
			&-\mathbb{E}\left[\widetilde{\mathcal{X}}_{n,i} \right] + \sum_{j=1}^m \widehat{K}_j^* \mathbb{E}\left[ \widetilde{\mathcal{X}}_{n,i} \widetilde{\mathcal{X}}_{n,j} \right] + \lambda - \mu_i = 0  \label{eq: partial Ki} \\
			& \widehat{K}^{*T} \mathbf{1} - 1 = 0  \label{eq: partial lambda} \\
			& \mu_i \widehat{K}_i^* = 0 .    
			\label{eq: Mu_i x Ki* = 0} 
		\end{align}
		From Equation~(\ref{eq: partial Ki}), we obtain, for $i=1, \ldots,m$,
		\begin{equation} \label{eq: Mu_i with lambda}
			\mu_i = -\mathbb{E}\left[ \widetilde{\mathcal{X}}_{n,i} \right] + \sum_{j=1}^m \widehat{K}_j^* \mathbb{E}\left[ \widetilde{\mathcal{X}}_{n,i} \widetilde{\mathcal{X}}_{n,j} \right] + \lambda . 
		\end{equation}
		Since $\mu_i \widehat{K}_i^* = 0$ for all $i$, we take weighted sum of Equation~(\ref{eq: Mu_i with lambda}); i.e.,
		\begin{equation} \label{eq: weighted sum of mu_i Ki}
			\sum_{i=1}^m \mu_i \widehat{K}_i^* = -\widehat{K}^{*T} \mathbb{E}\left[ \widetilde{\mathcal{X}}_{n} \right] + \widehat{K}^{*T} \mathbb{E}\left[ \widetilde{\mathcal{X}}_n \widetilde{\mathcal{X}}_n^T \right] \widehat{K}^* + \lambda = 0.
		\end{equation}
		This implies that
		$    \lambda = \widehat{K}^{*T}\mathbb{E}\left[ \widetilde{\mathcal{X}}_n \right] - \widehat{K}^{*T} \mathbb{E}\left[ \widetilde{\mathcal{X}}_n \widetilde{\mathcal{X}}_n^T \right] \widehat{K}^*.
		$
		Substituting this into Equation~(\ref{eq: Mu_i with lambda}), we have, for $i=1,2,\ldots,m$,
		\begin{align} \label{eq: mu_i}
			\mu_i = -\mathbb{E}\left[ \widetilde{\mathcal{X}}_{n,i} \right] + \sum_{j=1}^m \widehat{K}_j^* \mathbb{E}\left[ \widetilde{\mathcal{X}}_{n,i} \widetilde{\mathcal{X}}_{n,j} \right] 
			+ \widehat{K}^{*T} \mathbb{E}\left[ \widetilde{\mathcal{X}}_n \right] - \widehat{K}^{*T} \mathbb{E}\left[ \widetilde{\mathcal{X}}_n \widetilde{\mathcal{X}}_n^T \right] \widehat{K}^*.
		\end{align}
		From Equation~(\ref{eq: mu_i}) and the fact that~$\mu_i \widehat{K}_i^* = 0$, it follows that for
		$i=1,2,\ldots,m$, if $\widehat{K}_i^* > 0$, then~$\mu_i = 0 $ and 
		$$  \mathbb{E}\left[\widetilde{\mathcal{X}}_{n,i}\right] -\sum_{j=1}^m \widehat{K}_j^* \mathbb{E}\left[\widetilde{\mathcal{X}}_{n,i} \widetilde{\mathcal{X}}_{n,j} \right] \nonumber 
		= \widehat{K}^{*T} \mathbb{E}\left[ \widetilde{\mathcal{X}}_{n} \right]
		- \widehat{K}^{*T} \mathbb{E}\left[ \widetilde{\mathcal{X}}_n \widetilde{\mathcal{X}}_n^T \right] \widehat{K}^*.
		$$
		On the other hand,
		if $\widehat{K}_i^* = 0$, then $\mu_i \geq 0$ and
		$$    \mathbb{E}\left[\widetilde{\mathcal{X}}_{n,i}\right] -\sum_{j=1}^m \widehat{K}_j^* \mathbb{E}\left[\widetilde{\mathcal{X}}_{n,i} \widetilde{\mathcal{X}}_{n,j} \right] \nonumber 
		\leq \widehat{K}^{*T} \mathbb{E}\left[ \widetilde{\mathcal{X}}_{n} \right]
		- \widehat{K}^{*T} \mathbb{E}\left[ \widetilde{\mathcal{X}}_n \widetilde{\mathcal{X}}_n^T \right] \widehat{K}^*.
		$$
		
		To prove sufficiency, let $\widehat{K}^* \in \mathcal{K}$ and satisfies the conditions~(\ref{eq:condition 1}) and~(\ref{eq:condition 2}). 
		Then it follows that there exists~$\lambda \in \mathbb{R}$ and $\mu_j > 0$ such that the KKT conditions~(\ref{eq: partial Ki}) to (\ref{eq: Mu_i x Ki* = 0}) hold at~$\widehat{K}^*$. 
		Since the constrained minimization problem is a convex optimization problem, it follows that the KKT conditions are also sufficient for optimality. Hence, $\widehat{K}^*$ is optimal; see~\cite{ boyd2017multi}. 
	\end{proof}
	
	\begin{remark} \rm \label{rmk:approx K* is close to true K*}
		Let $\widehat{K}^*$ be the optimum obtained by solving the approximate frequency-dependent log-optimal portfolio problem~(\ref{problem: approximate log-optimal portfolio problem}) and $K^*$ be the true log-optimum. Using Jensen's inequality, we~have 
		\begin{align*}
			0 \leq g(K^*) - g(\widehat{K}^*) 
			&=  \mathbb{E} \left[ \log \frac{1+K^{*^T} \widetilde{ \mathcal{X} }_n }{1 + \widehat{K}^{*^T} \widetilde{ \mathcal{X} }_n }\right]\\
			& \leq \log \mathbb{E} \left[  \frac{1+K^{*^T} \widetilde{ \mathcal{X} }_n }{1 + \widehat{K}^{*^T} \widetilde{ \mathcal{X} }_n }\right].
		\end{align*}
		The right-hand side is approximately zero when $K^* \approx \widehat{K}^*$. As we will see later in this paper, this is typically the case. 
		More interestingly, Lemma~\ref{lemma: Optimality conditions} serves to compliment Lemma~\ref{lemma: dominance theorem with costs} by characterizing the log-optimal weights; see Example~\ref{example: two asset toy example} below. 
	\end{remark}

	\begin{example}[Two-Asset Toy Example] \label{example: two asset toy example} \rm  
		To demonstrate the application of Lemmas~\ref{lemma: dominance theorem with costs} and~\ref{lemma: Optimality conditions}, we first consider a high-frequency investor who rebalances her portfolio at every period; i.e., $n:=1$. 
		Specifically,  consider a two-asset portfolio including a risk-free cash asset with zero interest rate; i.e. $X_1(k) := r_f = 0$ with probability one and a risky asset with a binomial return $X_2(k) \in  \{ -\frac{1}{2}, \frac{1}{2} \}$ with probability 
		$
		P\left( X_2(k) =\frac{1}{2} \right) := p \in \left( \frac{1}{2} + c_2, 1 \right).
		$
		The transaction costs are $c_1 = 0$ for cash and~$c_2 < 1/2$ for the risky asset. 
		If $\widehat{K}_2^* > 0$, by Lemma~\ref{lemma: Optimality conditions}, we have
		\begin{align*}
			& (1-\widehat{K}_2^*)\left( p-\frac{1}{2}-c_2 \right)  - \widehat{K}_2^* (1 - \widehat{K}_2^*) \left[\frac{1}{4}-2c_2 \left( p-\frac{1}{2} + c_2^2 \right) \right] = 0.
		\end{align*}
		This implies that $\widehat{K}_2^* = \frac{-(4c_2-4p+2)}{4c_2^2+4c_2-8c_2p+1}$.
		Incorporating with Lemma~\ref{lemma: dominance theorem with costs}, we conclude
		\begin{equation}
			K_2^* := 
			\begin{cases}
				\widehat{K}_2^*  & \text{if} \; p\in \left(\frac{1}{2}+c_2,\frac{4c_2^2+8c_2+3}{4+8c_2}\right] \\
				1 & \text{if} \; p\in \left[\frac{4c_2^2+8c_2+3}{4+8c_2}, 1\right)
			\end{cases}
		\end{equation}
		and $K_1^*= 1-K_2^*.$
		Note that if $c_2 = 0$, then $K_2^* = 2(2p-1)$ for $p \in \left(\frac{1}{2},  \frac{3}{4}\right]$ or $K_2^* = 1$ for $p \in \left[\frac{3}{4}, 1\right]$, which reduces to the classical ELG result in gambling; see \cite{Kelly_1956, hsieh2018frequency}.
		
		To see the effect of rebalancing period $n > 1$, we consider a second example with $n=2$; i.e., one rebalances the portfolio for every two periods.
		For $c_2 \in \left[0, \frac{1}{4}\right)$, applying Lemmas~\ref{lemma: dominance theorem with costs} and~\ref{lemma: Optimality conditions} yield
		\begin{equation}
			K_2^* := 
			\begin{cases}
				\widehat{K}_2^*, & \text{if} \; p\in \left( \frac{1}{2}+c_2, -\frac{4c_2-9}{8c_2+6}-\frac{1}{4}C\right)\\
				1, & \text{if} \; p\in \left[ -\frac{4c_2-9}{8c_2+6}-\frac{1}{4}C, 1\right],
			\end{cases}
		\end{equation}
		where 
		$
		\widehat{K}_2^* = \frac{16p^2+16p-16c_2-12}{16c_2^2+24c_2+32p^2-16p-32p^2c_2-32pc_2+9},
		$
		and
		$
		C := \frac{\sqrt{-256c_2^4+384c_2^3+640c_2^2-504c_2+81}}{4c_2+3}
		$ and $K_1^* = 1-K_2^*$.
		If $c_2=0$, we have $K_2^*=\frac{16p^2+16p-12}{32p^2-16p+9}$ for $ p \in \left(\frac{1}{2}, \frac{3}{4}\right)$ and $K_2^*=1$ for $p \in \left[\frac{3}{4}, 1\right]$. 
	\end{example}

	\section{Feasible Region and Efficient Frontier}
	Similar to how the performance of a portfolio can be characterized by its expected return and variance in the celebrated Markowitz framework, the performance of log-optimal portfolios can be characterized by the \textit{expected logarithmic growth} and \textit{variance of the logarithmic growth} and plotted on a two-dimensional diagram; see \cite{luenberger2013investment}.
	The region mapped out by all possible portfolios defines the \textit{feasible region}. 
	That is, for any fixed $n \geq 1$,  we consider
	\[
	K \mapsto \left( \mathbb{E}\left[ \log \frac{V_K(n)}{V(0)} \right], {\rm var}\left(\log \frac{V_K(n)}{V(0)} \right) \right) \subset \mathbb{R}^2.
	\]
	As demonstrated later in Example~\ref{example: two fund thm for five asset portfolio}, the feasible region is \textit{convex to the left}. This means that if we take any two points within the region, the straight line connecting them does not cross the left boundary of the feasible region.
	A similar idea about analyzing the efficient frontier analytically can be found in \cite{merton1972analytic}.

	\subsection{A Version of The Two-Fund Theorem}
	In the approximate log-optimal portfolio problem, as defined in~(\ref{problem: approximate log-optimal portfolio problem}), the upper left-hand portion at the boundary of the feasible region is referred to as the \textit{approximate efficient frontier}.
	This frontier is considered \textit{efficient} in terms of expected logarithmic growth rate and its variance; see also \cite[Chapter 14]{luenberger2013investment}.
	Then, with the aid of Lemma~\ref{lemma: Optimality conditions}, we can obtain a version of the two-fund theorem, which states that any convex combination of two optimal weights from the optimality conditions is still optimal.

	\begin{theoremrep}[A Version of Two-Fund Theorem] \label{theorem: two-fund theorem}
		Let $K', K'' \in \mathcal{K}$ be two weights satisfying the optimality conditions stated in Lemma~\ref{lemma: Optimality conditions}. Define a convex combination $\overline{K}_\alpha := \alpha K' + (1-\alpha)K''$ with $\alpha \in [0,1]$.  
		Then $\overline{K}_\alpha$ also satisfies the optimality conditions.
	\end{theoremrep}
	
	\begin{proof} 
		Take $K'$ and $K''$ be two weights satisfying Equations~(\ref{eq: partial Ki}) to (\ref{eq: Mu_i x Ki* = 0}),  for all $\alpha \in [0,1]$, we must show that the convex combination of the two weights $K'$ and $K''$,
		$
		\overline{K}_\alpha := \alpha K' + (1-\alpha)K''
		$,  with the $j$th component $\overline{K}_{\alpha,j}$,
		also satisfies the same optimality equations. 
		In particular, we begin by proving that $\overline{K}_\alpha$ satisfies Equation~(\ref{eq: partial lambda}). 
		Indeed, we observe that
		\begin{align}
			(\alpha K' + (1-\alpha)K'')^T \textbf{1} -1 &=	\alpha K'^T\textbf{1}  + (1-\alpha)K''^T \textbf{1} -1 \label{eq: linear combination of weight}
		\end{align}
		where $\textbf{1} := [1 \; 1 \; \cdots \; 1]^T \in \mathbb{R}^m.$
		Since $K',K''$ satisfy Equation~(\ref{eq: partial lambda}), it follows that $ K'^T \textbf{1} = 1$ and~$K''^T \textbf{1} = 1.$ 
		Therefore, Equation~(\ref{eq: linear combination of weight}) becomes
		$	(\alpha K' + (1-\alpha)K'')^T \textbf{1} -1  = \alpha + (1-\alpha) = 1
		$	
		which proves that the convex combination $\overline{K}_\alpha$ satisfies Equation~(\ref{eq: partial lambda}).
		To see it also satisfies $\mu_i \widehat{K}_i^* = 0$ for $i=1,\dots, m$, we observe that
		\begin{align*}
			\mu_i (\alpha K_i' + (1-\alpha)K_i'') 
			&= \alpha \mu_i K_i' + (1-\alpha) \mu_i K_i'' \\
			&= \alpha \cdot 0 + (1-\alpha) \cdot 0 = 0. 
		\end{align*}
		To complete the proof, we show that $\overline{K}_\alpha$ satisfies Equation~(\ref{eq: partial Ki}). It suffices to show that for~$i=1,\dots,m$,
		$
		-\mathbb{E}\left[\widetilde{\mathcal{X}}_{n,i} \right] + \sum_{j=1}^m \overline{K}_{\alpha,j} \mathbb{E}\left[ \widetilde{\mathcal{X}}_{n,i} \widetilde{\mathcal{X}}_{n,j} \right] + \lambda = \mu_i. 
		$
		Note that the left-hand side using $\overline{K}_\alpha$ yields
		\begin{align*} 
			&-(\alpha + (1-\alpha))\mathbb{E}\left[ \widetilde{\mathcal{X}}_{n,i} \right] 
			+ \sum_{j=1}^m (\alpha K_j' + (1-\alpha) K_j'') \mathbb{E}\left[ \widetilde{\mathcal{X}}_{n,i} \widetilde{\mathcal{X}}_{n,j} \right] + (\alpha + (1-\alpha)) \lambda  \\
			&= \alpha \left( - \mathbb{E}\left[ \widetilde{\mathcal{X}}_{n,i} \right]  + \sum_{j=1}^m  K_j'  \mathbb{E}\left[ \widetilde{\mathcal{X}}_{n,i} \widetilde{\mathcal{X}}_{n,j} \right] + \lambda \right)  + (1-\alpha )\left(-\mathbb{E}\left[ \widetilde{\mathcal{X}}_{n,i} \right]  + \sum_{j=1}^m  K_j'' \mathbb{E}\left[ \widetilde{\mathcal{X}}_{n,i} \widetilde{\mathcal{X}}_{n,j} \right] +  \lambda \right) \\
			&= \alpha \mu_i + (1-\alpha) \mu_i = \mu_i
		\end{align*}
		which completes the proof.
	\end{proof}

	\begin{example}[Five-Asset Portfolio with Intraday Minute-by-Minute Data] \rm \label{example: two fund thm for five asset portfolio}
		This example illustrates the feasible region, efficient frontier, and Two-Fund Theorem~\ref{theorem: two-fund theorem} using a five-asset portfolio consisting of  a bank account, Vanguard Total Stock Market Index Fund ETF (Ticker: VTI), Vanguard Total Bond Market Index Fund ETF~(Ticker: BND), Vanguard Emerging Markets Stock Index Fund ETF (Ticker:~VWO), and Bitcoin to the USD exchange rate~(Ticker: XBTUSD). 
		The portfolio is well-diversified, covering the large US-Euro stock market, the global bond market, and cryptocurrency. 
		Here, transaction costs~$c_i = 0.001\%$ are imposed on the ETFs~(i.e.,~$i \in \{ {\rm VTI, BND, VWO}\}$) and costs $c_{\rm XBTUSD} = 0.1\%$ on the XBTUSD.\footnote{According to the platform Binance \href{binance.com/en}{binance.com/en}, regular users are charged a transaction cost of $0.1\%$ for Bitcoin trades.} 
		Besides, investors receive interest at a (per-minute) rate~$r_f = 0.0001\%$ if they keep their funds in the bank account.
		The data used in this example spans from~$09:30:00$~AM to~$15:59:00$~PM on December~3,~2021, where the associated price trajectories for the four risky assets are shown in Figure~\ref{fig: stock price}.\footnote{The price data for the four underlying risky assets (VTI, BND, VWO, XBTUSD) are retrieved from the Bloomberg terminal (accessed on November 17, 2022).} 
		To derive the approximate log-optimal portfolio and examine its trading performance, we split the entire data set into two parts: The first portion from $09:30:00$~AM to~$12:29:00$~PM is for the in-sample optimization,  and the second portion~$12:30:00$~PM to~$15:59:00$~PM is for the out-of-sample~testing.\footnote{
			This will be demonstrated later in Example~\ref{example: Trading Performance with Different Rebalancing Periods} in the next section.}

		\begin{figure}[h!]
			\centering
			\includegraphics[width=.9\linewidth]{"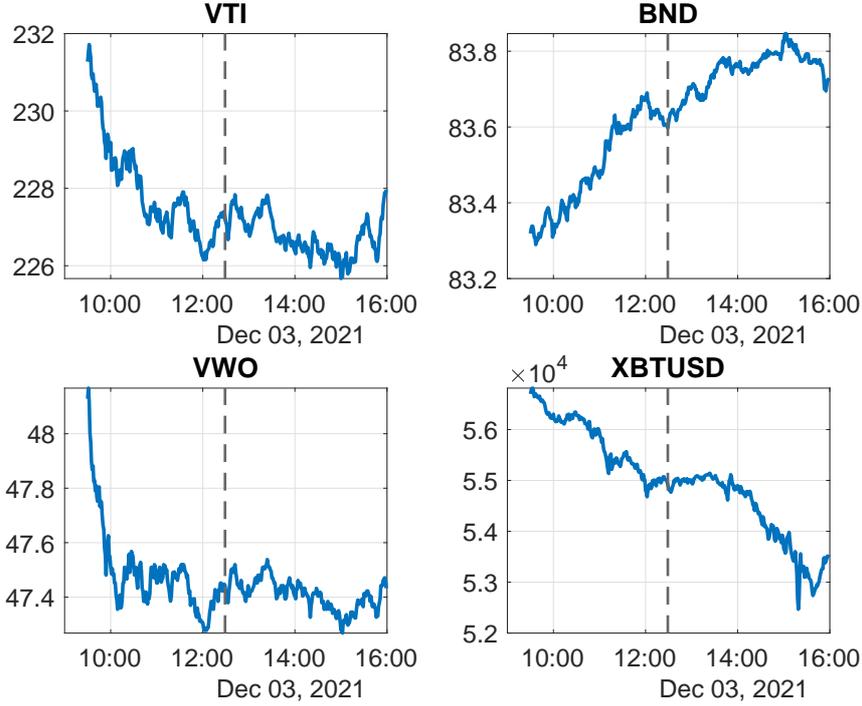"}
			\caption{Intraday Minute-by-Minute Prices for VTI, BND, VWO, and XBTUSD.}
			\label{fig: stock price}
		\end{figure}

		Fix $n \geq 1$. We define the approximate feasible region~$\mathcal{H} :=\left\{\left(  \widehat{g}_n(K), \; {\rm var}\left( \log \frac{V_K(n)}{V(0)} \right) \right): K \in \mathcal{K} \right\}$.
		Figures~\ref{fig: two funds(n=1)} and~\ref{fig: two funds(n=5)} show the points in $\mathcal{H}$ and the approximate efficient frontier for different rebalancing periods $n=1$ and $n=5$, respectively. 
		As predicted by Theorem~\ref{theorem: two-fund theorem},  any convex combination of two optimal weights $K'$ and $K''$ satisfying optimality conditions~\ref{lemma: Optimality conditions}, denoted as~$\overline{K}_\alpha = \alpha K' + (1-\alpha) K''$ with~$\alpha \in [0,1]$, satisfies the optimality conditions.
		Interestingly, it also lies on the approximate efficient frontier due to the small scale of the minute-by-minute price data\footnote{This phenomenon disappears when  using daily data; see also Remark~\ref{remark: comments on two-fund and efficient frontier} for more information.}; see Figures~\ref{fig: two funds(n=1)} and~\ref{fig: two funds(n=5)} for an example with $\alpha = 0.5$.
		Similar findings also hold for other rebalancing periods $n > 5$. 
	\end{example}
	
	\begin{remark}\rm \label{remark: comments on two-fund and efficient frontier}
		While not pursued further in this paper, the optimality conditions derived in Lemma~\ref{lemma: Optimality conditions} only consider the approximate logarithmic growth function $\widehat{g}_n(K)$ without taking into account the log-variance ${\rm var}(\log V_n(K)/V(0))$. As a result, to ensure that any convex combination of two points on the approximate efficient frontier is still on the frontier, the log-variance must be included in the optimization problem~(\ref{problem: approximate log-optimal portfolio problem}). This topic presents a promising research direction.
	\end{remark}

	\begin{figure}[h!]
		\centering
		\includegraphics[width=.8\linewidth]{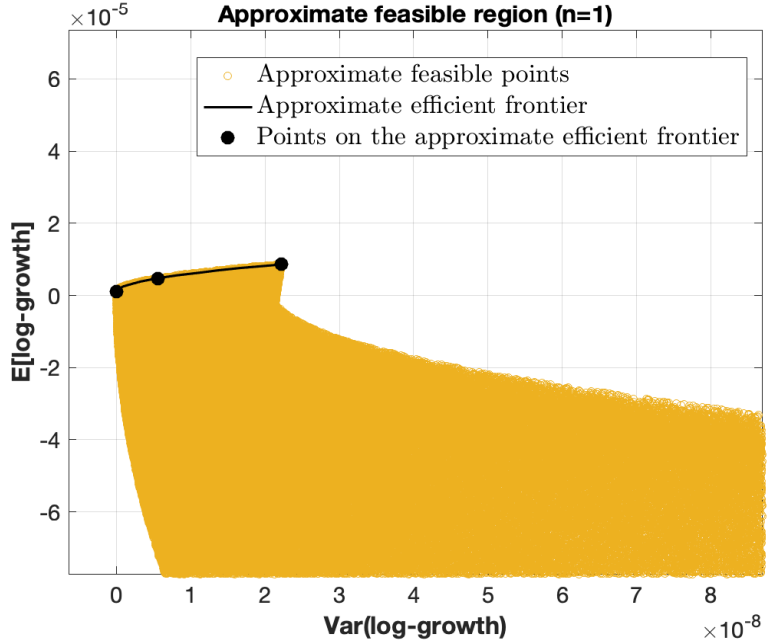}
		\caption{An illustration of Feasible Set, Efficient Frontier, and Two-Fund Theorem ($\overline{K}_\alpha$ with~$\alpha=0.5$) using Rebalancing Period $n=1$ (Minute). } 
		\label{fig: two funds(n=1)}
	\end{figure}
	
	\begin{figure}[h!]
		\centering
		\includegraphics[width=.8\linewidth]{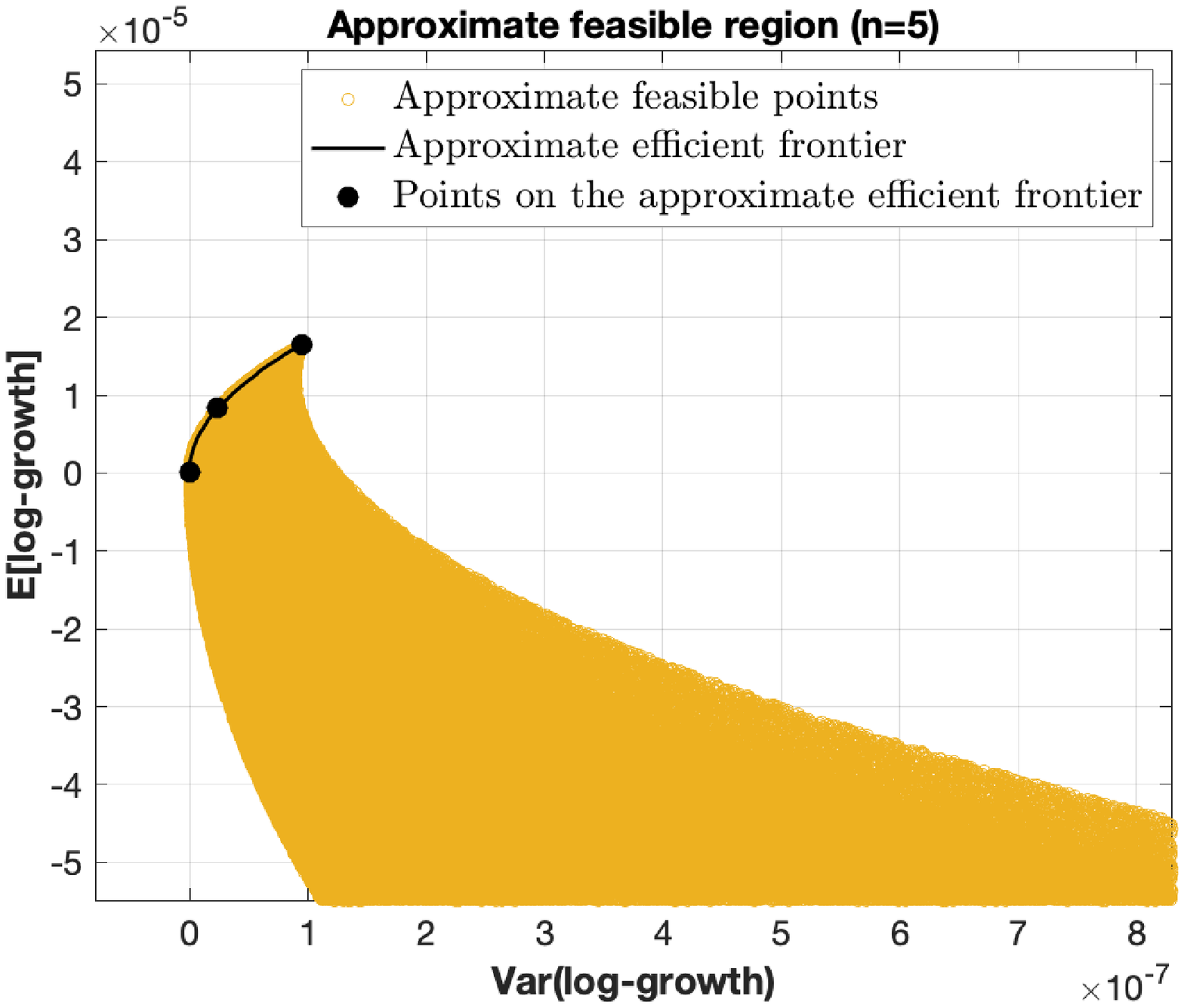}
		\caption{An Illustration of Feasible Set, Efficient Frontier, and Two-Fund Theorem ($\overline{K}_\alpha$ with~$\alpha=0.5$) using Rebalancing Period~$n=5$ (Minutes).} 
		\label{fig: two funds(n=5)}
	\end{figure}

	\section{Illustrative Examples} \label{section: illustrative examples}
	This section presents empirical examples to demonstrate the validity of our theory. 
	In the first two examples, we use the same intraday data set as Example~\ref{example: two fund thm for five asset portfolio} to compare the log-optimal and approximate log-optimal results. 
	We evaluate the impact of different rebalancing periods and levels of costs on trading performance.
	The third example examines the capability of our theory to handle the mid-sized portfolio case by considering a portfolio of thirty-two assets (with a Bank account, Dow-30 stocks, and cryptocurrency) using daily historical price data. 
	
	\begin{example}[Five-Asset Portfolio Revisited] \rm \label{example: The Similarity of K* and KHatStar}
		This example demonstrates that the approximate optimal weights~$\widehat{K}^*$ from Lemma~\ref{lemma: Optimality conditions} is sufficiently close to the optimal weights $K^*$. 
		To demonstrate this, we choose the weights $K^*$ on the efficient frontier that satisfy the logarithmic variance condition:
		$
		{\rm var}\left( \log \frac{V_{K^*}(n)}{V(0)} \right)  \equiv {\rm var}\left( \log \frac{V_{\widehat{K}^*}(n)}{V(0)} \right).
		$
		Figures~\ref{fig: weights (n=1)} and \ref{fig: weights (n=5)} show the portfolio weights of the two trading strategies: the approximate log-optimal weights $\widehat{K}^*$, and the true log-optimal weights~$K^*$ with different rebalancing periods~$n=1$ and~$n=5$. 
		The results show that the weights of the two strategies are nearly identical, i.e.,~$\widehat{K}_i^* \approx K_i^*$, for all~$i = 1, 2, \ldots, 5$. This suggests that the approximate optimal weights $\widehat{K}^*$ are a good approximation of the true optimal weights $K^*$.
		While not showing here, it is also worth mentioning that if the transaction costs are sufficiently large,  then both of the optima $K^*$ and $\widehat{K}^*$ will tend to fully invest in the bank account, meaning that~$K_{\rm Bank\; account}^* \approx \widehat{K}_{\rm Bank \; account}^* \approx 1.$

		\begin{figure}[h!]
			\centering
			\includegraphics[width=.8\linewidth]{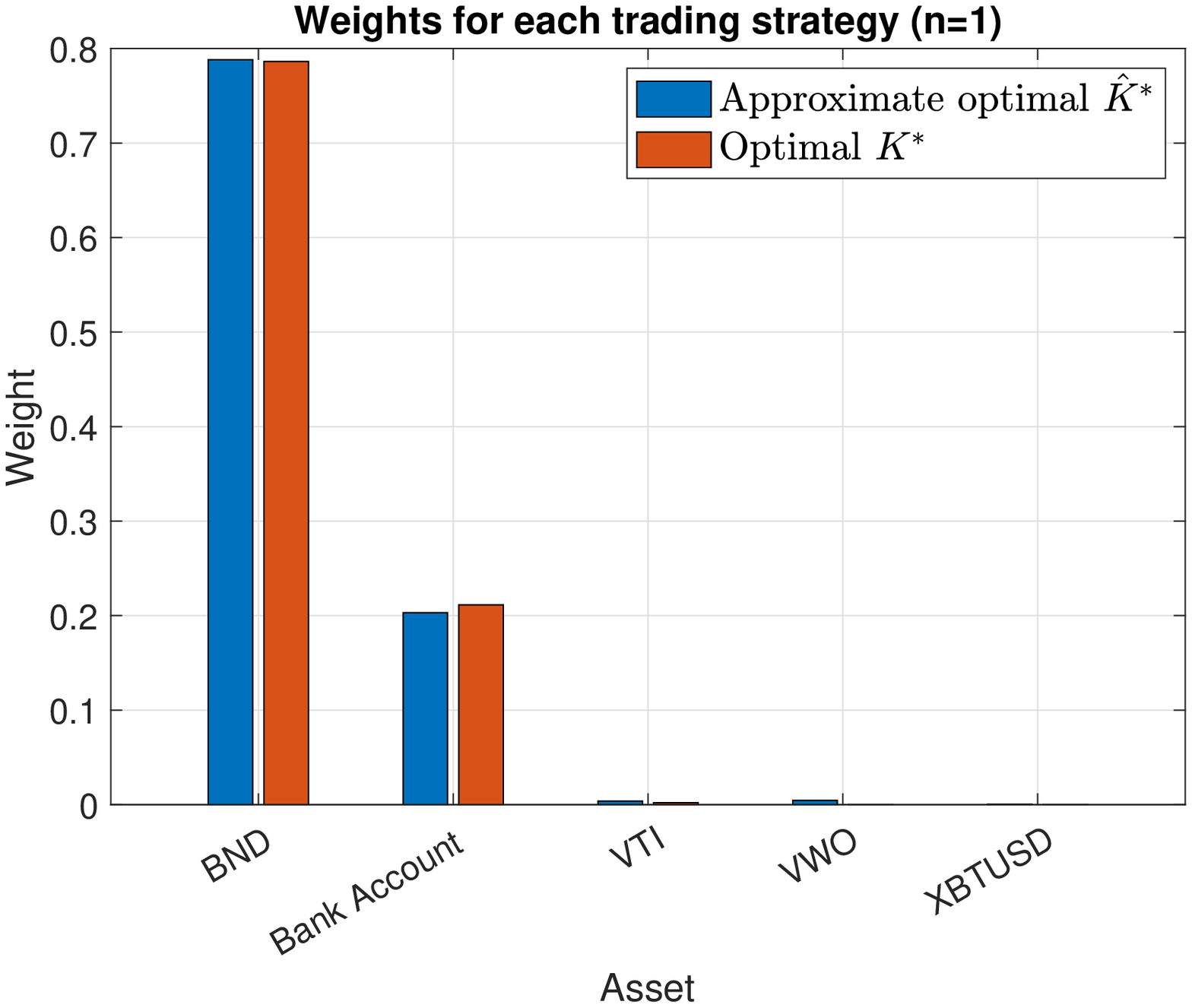}
			\caption{Portfolio Weights $K^*$ versus $\widehat{K}^*$ with Rebalancing Period $n=1$ (Minute).}
			\label{fig: weights (n=1)}
		\end{figure}
		
		\begin{figure}[h!]
			\centering
			\includegraphics[width=.8\linewidth]{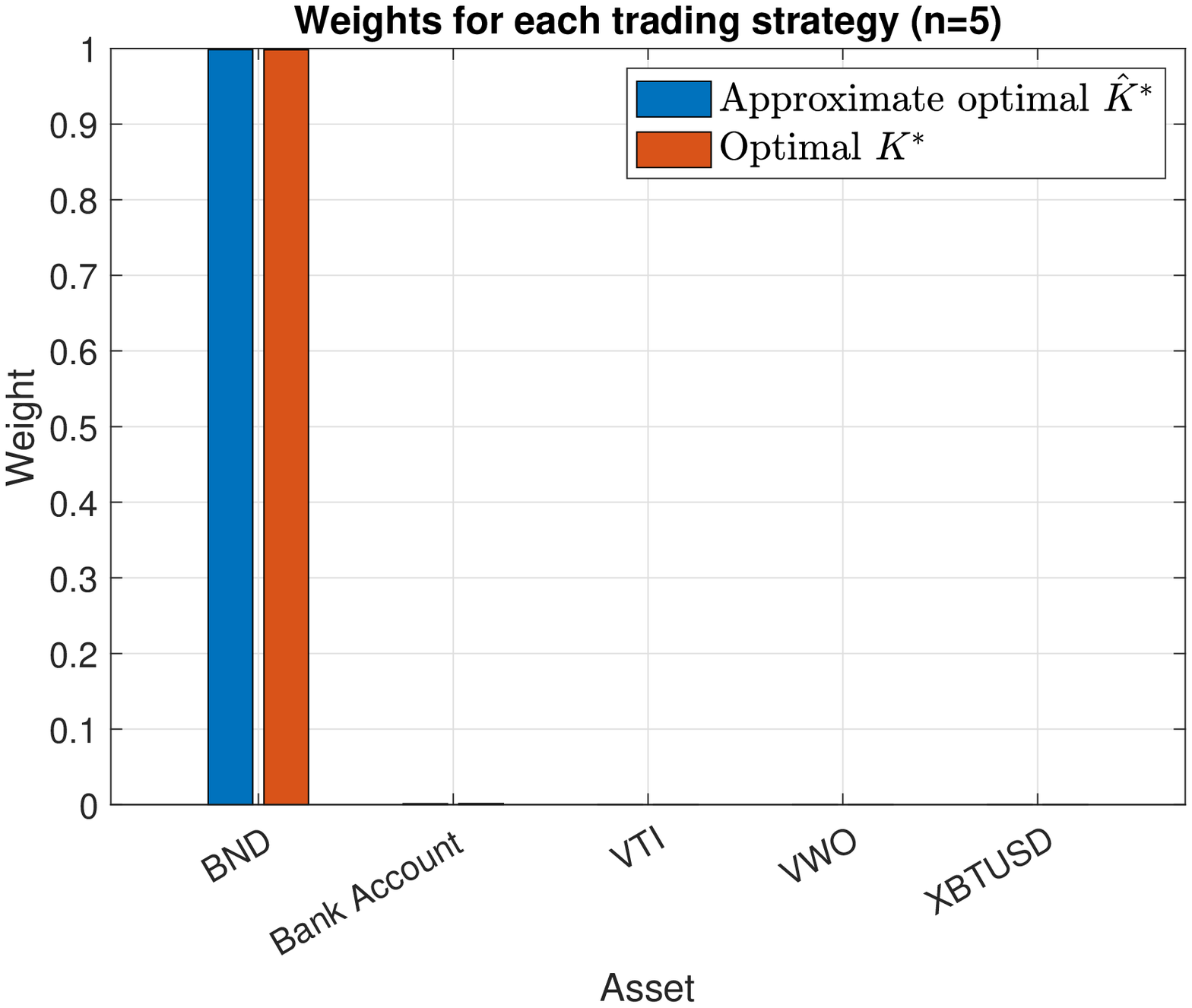}
			\caption{Portfolio Weights $K^*$ versus $\widehat{K}^*$ with  Rebalancing Period $n=5$ (Minutes).}
			\label{fig: weights (n=5)}
		\end{figure}

		
		
	\end{example}

	\medskip
	\begin{example} [Trading Performance with Different Rebalancing Periods and Costs] \rm \label{example: Trading Performance with Different Rebalancing Periods}
		This example illustrates the in-sample and out-of-sample trading performances using the solutions obtained in previous Example~\ref{example: The Similarity of K* and KHatStar}.
		Specifically, let $V(N)$ be the account value at the terminal stage $N$.
		The \textit{portfolio realized return} in period $k$ is~$R^p(k) := \frac{V(k+1) - V(k)}{V(k)}$.
		With the aid of this realized return, we consider the following metrics to study the trading performance:  The realized \textit{cumulative rate of return} $\frac{V(N)-V(0)}{V(0)}$, \textit{realized log-return} $\log \frac{V(N)}{V(0)}$,  \textit{volatility} $\sigma:= {\rm std}(R^p(k))$, \textit{maximum percentage drawdown}~$d^*:= \max_{0\leq k \leq N} \frac{V_{\max}(k) - V(k)}{V_{\max}(k)}$ with $V_{\max}(k):= \max_{0 \leq i \leq k} V(i)$, and the~\textit{$N$-period Sharpe ratio} $\sqrt{N} \cdot SR$ with $SR$ being the \textit{per-period realized Sharpe ratio}.\footnote{Given a  sequence of the realized  portfolio per-period returns $\{R^p(k): k=0,1,\dots,N-1\}$, the per-period Sharpe ratio is $SR:= \frac{\overline{R}^p - r_f}{s}$ where $\overline{R}^p:=\frac{1}{N} \sum_{k=0}^{N-1} R^p(k)$ is the sample mean return, $r_f$ is the per-period risk-free rate, and $s:= \sqrt{ \frac{1}{N-1} \sum_{k=0}^{N-1} (R^p(k) - \overline{R}^p)^2}$ is the sample standard deviation of portfolio returns. A detailed discussion of this topic can be found in \cite{lo2002statistics}.}
		
		Starting with initial account $V(0)= \$1$, Figures~\ref{fig: trading performance(n=1)} and~\ref{fig: trading performance(n=5)} reveal the in-sample and out-of-sample values of the trading account using the three trading strategies: The log-optimal portfolio with weight~$K^*$, the approximate log-optimum~$\widehat{K}^*$, and buy-and-hold with equal weight~$K=1/m$, for the same five-asset portfolio considered in Example~\ref{example: two fund thm for five asset portfolio}. 
		Note that there are nonzero transaction costs of $0.001\%$ for trading ETFs and a cost of $0.1\%$ for trading cryptocurrency. 
		From the figures, we see that the account value trajectory obtained using~$\widehat{K}^*$ is similar to that obtained using~$K^*$.
		Moreover, both of the portfolios outperform the equally-weighted buy-and-hold strategy. 
		
		To see clearly the effect of transaction costs on trading performance, we consider an additional scenario with zero costs for trading both ETFs and cryptocurrency; see Figures~\ref{fig: trading performance_ZeroCost(n=1)} and~\ref{fig: trading performance_ZeroCost(n=5)} for the  in-sample and out-of-sample account value trajectories under rebalancing period~$n=1$ and $n=5$, with zero costs. 
		Both figures demonstrate that the account values are improved when there are no costs.

		Tables~\ref{table: Descriptive Statistics} and~\ref{table: Descriptive Statistics_n=5} provide an overview of the out-of-sample  trading performance metrics of the three trading strategies for different rebalancing periods~$n=1$ and $n=5$, respectively.
		For the case of~$n=1$, i.e., the portfolio is rebalanced every minute, we find that the zero costs lead to better performance of the log-optimal portfolio in terms of the Sharpe ratio. 
		When nonzero transaction costs are imposed, the Sharpe ratios for $K^*$ and~$\widehat{K}^*$ become negative. 
		This suggests that transaction costs have a negative impact on trading performance especially when rebalancing occurs frequently. 
		On the other hand, for the case of $n=5$, where the portfolio is rebalanced every five minutes, the Sharpe ratios are positive and generally higher than those for $n=1$. This indicates that a longer rebalancing period incurs fewer  costs and may lead to better trading performance.
	\end{example}

	\medskip
	\begin{figure}[h!]
		\centering
		\includegraphics[width=.8\linewidth]{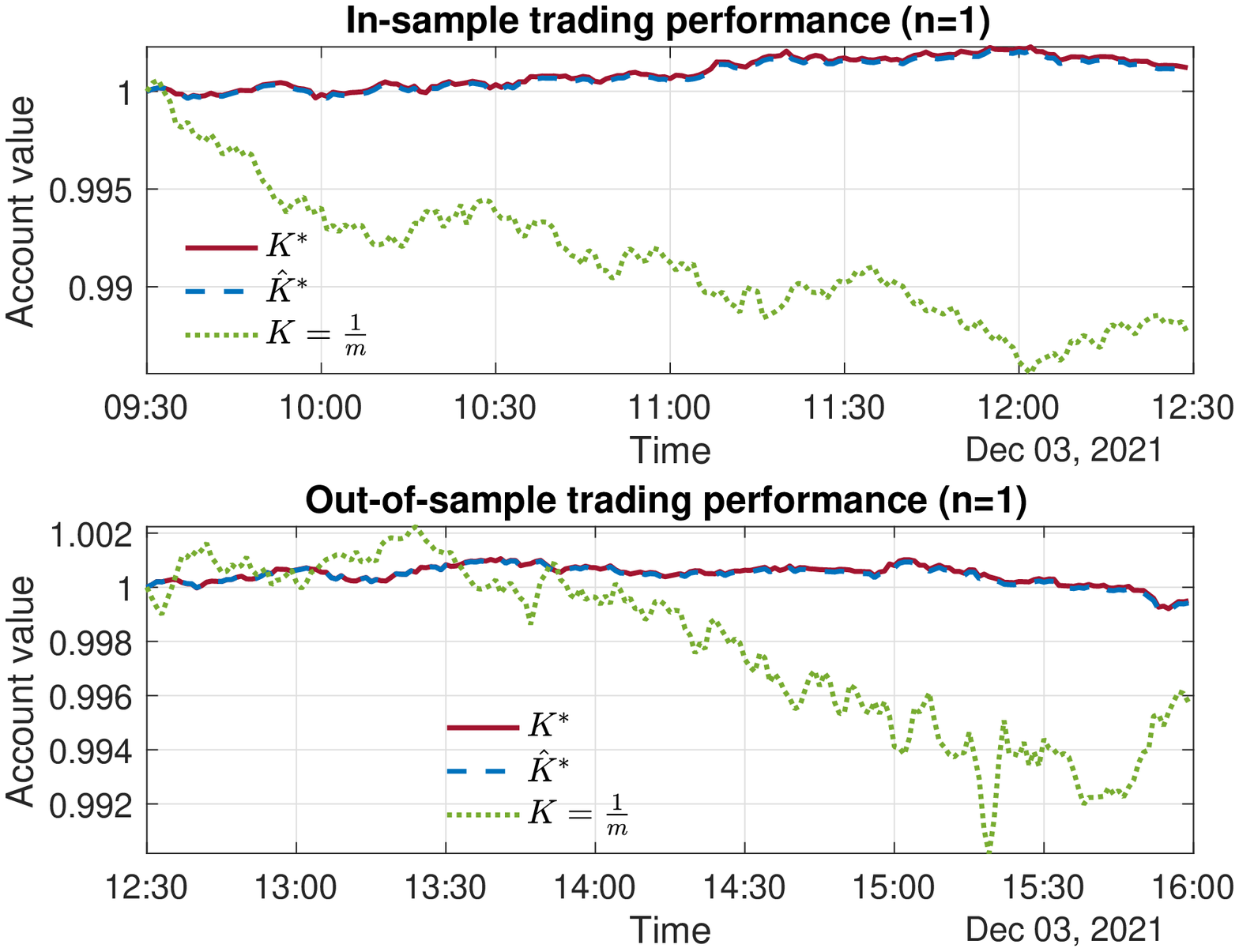}
		\caption{Account Value Trajectories under Three Trading Strategies (Optimal $K^*$, $\widehat{K}^*$, and Equally-Weighted $K=1/m$) with Rebalancing Period $n=1$ (Minute) and Nonzero Costs.}
		\label{fig: trading performance(n=1)}
	\end{figure}
	
	\medskip
	\begin{figure}[h!]
		\centering
		\includegraphics[width=.8\linewidth]{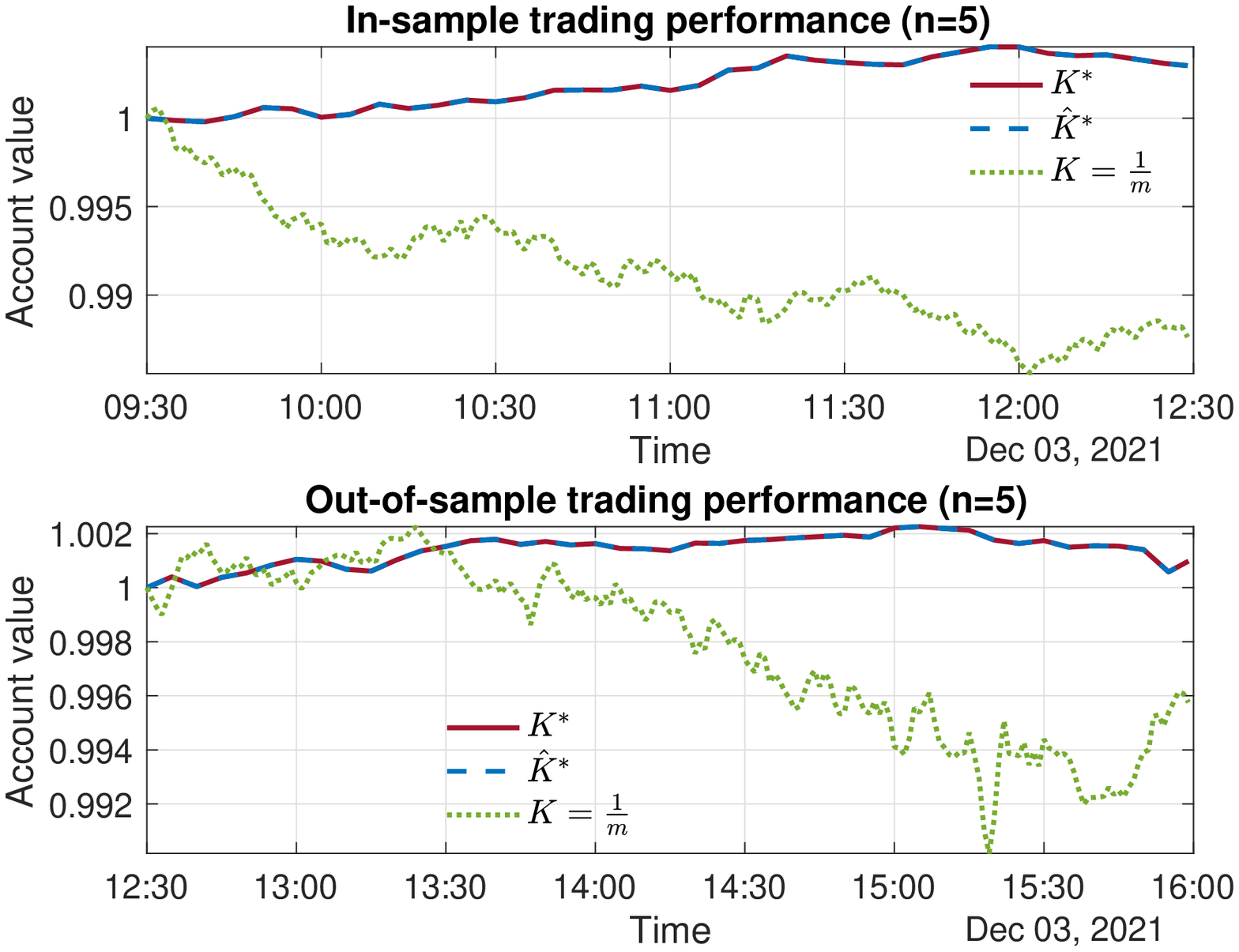}
		\caption{Account Value Trajectories under Three Trading Strategies (Optimal $K^*$, $\widehat{K}^*$, and Equally-Weighted $K=1/m$) with Rebalancing Period $n=5$ (Minutes)  and Nonzero Costs.}
		\label{fig: trading performance(n=5)}
	\end{figure}
	
	\medskip
	\begin{figure}[h!]
		\centering
		\includegraphics[width=.8\linewidth]{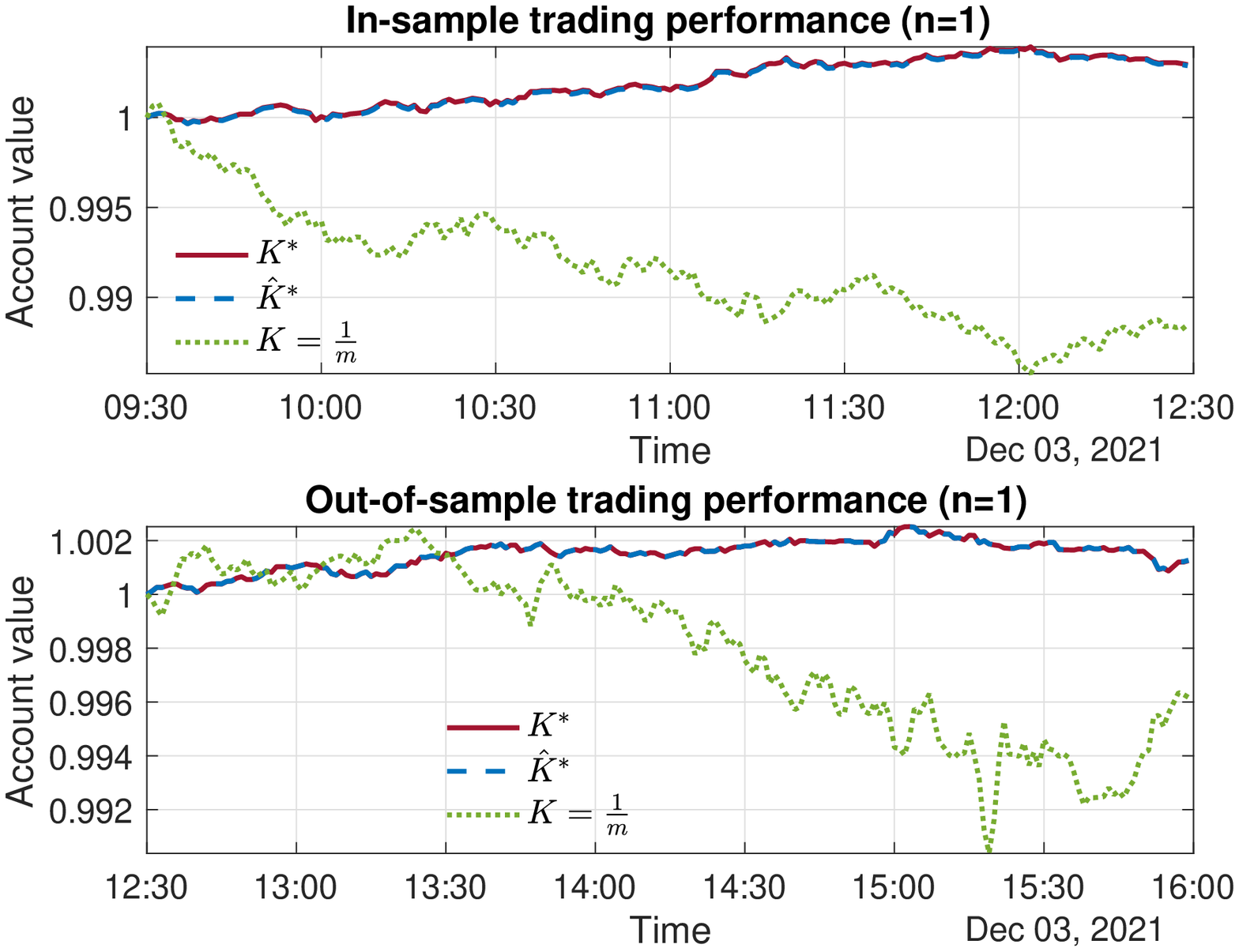}
		\caption{Account Value Trajectories under Three Trading Strategies (Optimal $K^*$, $\widehat{K}^*$, and Equally-Weighted $K=1/m$) with Rebalancing Period $n=1$ (Minute) and Zero Costs.}
		\label{fig: trading performance_ZeroCost(n=1)}
	\end{figure}
	
	\medskip
	\begin{figure}[h!]
		\centering
		\includegraphics[width=.8\linewidth]{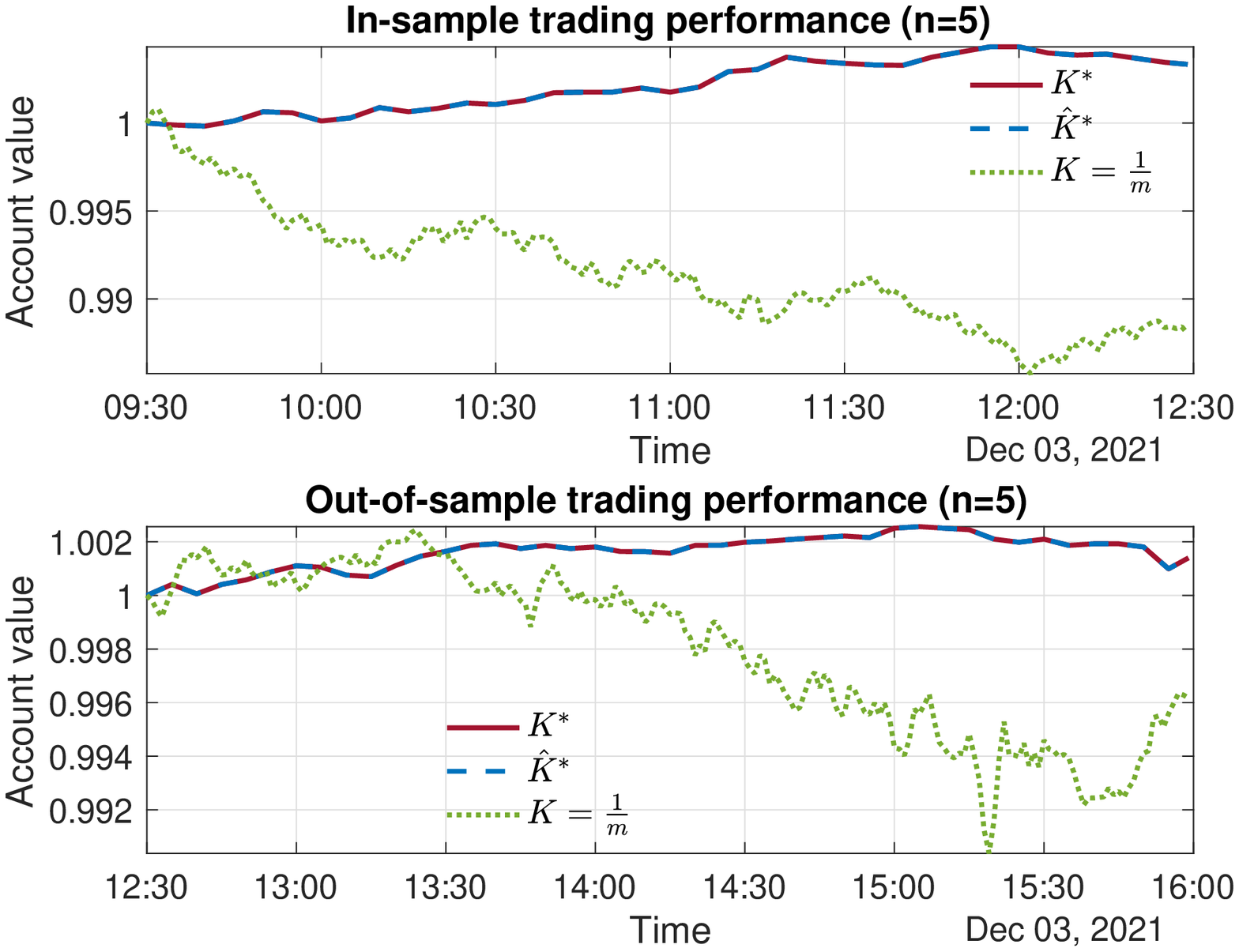}
		\caption{Account Value Trajectories under Three Trading Strategies (Optimal $K^*$, $\widehat{K}^*$, and Equally-Weighted $K=1/m$) with Rebalancing Period $n=5$ (Minutes) and Zero Costs.}
		\label{fig: trading performance_ZeroCost(n=5)}
	\end{figure}
	
	\medskip
	\begin{table*}[h!] 
		\centering
		\caption{Out-of-Sample Trading Performance Metrics with Different Transaction Costs with Rebalancing Period~$n=1$ (Minute)} 
		\begin{tabular}{|c|c|c|c|}
			\hline 
			Costs of $0\%$  for ETFs and cryptocurrency & $K^*$ & $\widehat{K}^*$ & Buy and hold \\
			\hline
			Cumulative rate of return $\frac{V(N)-V(0)}{V(0)}$ $(\%)$ & $0.13$ & $0.13$ & $-0.38$  \\
			\hline
			Realized log-growth $\log\frac{V(N)}{V(0)}$ $(\%)$ & $0.13$ & $0.13$ & $-0.38$ \\
			\hline
			Volatility $\sigma$ $(\%)$ & $0.01$ & $0.01$ & $0.05$ \\
			\hline
			Maximum percentage drawdown $d^*$ $(\%)$ & $0.16$ & $0.16$ & $1.20$\\
			\hline
			Sharpe ratio $\sqrt{N}SR$ &  $0.74$ &  $0.75$ &  $-0.58$ \\
			\hline
			\hline
			\hline
			Costs of $0.001\%$ for ETFs and $0.1\%$ for cryptocurrency & $K^*$ & $\widehat{K}^*$ & Buy and hold \\
			\hline
			Cumulative rate of return $\frac{V(N)-V(0)}{V(0)}$ $(\%)$ & $-0.05$ & $-0.06$ & $-0.42$  \\
			\hline
			Realized log-growth $\log\frac{V(N)}{V(0)}$ $(\%)$ & $-0.05$ & $-0.06$ & $-0.43$\\
			\hline
			Volatility $\sigma$ $(\%)$ & $0.01$ & $0.01$ & $0.05$\\
			\hline
			Maximum percentage drawdown $d^*$ $(\%)$ & $0.18$ & $0.19$ & $1.20$\\
			\hline
			Sharpe ratio $\sqrt{N}SR$ &  $-0.58$ &  $-0.65$ &  $-0.64$ \\
			\hline
			\hline
		\end{tabular}
		\label{table: Descriptive Statistics}
	\end{table*}

	\medskip
	\begin{table*}[h!] 
		\centering
		\caption{Out-of-Sample Trading Performance Metrics with Different Transaction Costs with Rebalancing Period~$n=5$ (Minutes)} 
		\begin{tabular}{|c|c|c|c|}
			\hline 
			Costs of $0\%$ for ETFs and cryptocurrency & $K^*$ & $\widehat{K}^*$ & Buy and hold \\
			\hline
			Cumulative rate of return $\frac{V(N)-V(0)}{V(0)}$ $(\%)$ & $0.14$ & $0.14$ & $-0.38$  \\
			\hline
			Realized log-growth $\log\frac{V(N)}{V(0)}$ $(\%)$ & $0.14$ & $0.14$ & $-0.38$ \\
			\hline
			Volatility $\sigma$ $(\%)$ & $0.02$ & $0.02$ & $0.05$ \\
			\hline
			Maximum percentage drawdown $d^*$ $(\%)$ & $0.16$ & $0.16$ & $1.20$\\
			\hline
			Sharpe ratio $\sqrt{N}SR$ &  $0.88$ &  $0.88$ &  $-0.58$ \\
			\hline
			\hline
			\hline
			Costs of $0.001\%$ for ETFs and $0.1\%$ for cryptocurrency & $K^*$ & $\widehat{K}^*$ & Buy and hold \\
			\hline
			Cumulative rate of return $\frac{V(N)-V(0)}{V(0)}$ $(\%)$ & $0.10$ & $0.10$ & $-0.42$  \\
			\hline
			Realized log-growth $\log\frac{V(N)}{V(0)}$ $(\%)$ & $0.10$ & $0.10$ & $-0.43$\\
			\hline
			Volatility $\sigma$ $(\%)$ & $0.02$ & $0.02$ & $0.05$\\
			\hline
			Maximum percentage drawdown $d^*$ $(\%)$ & $0.17$ & $0.17$ & $1.20$\\
			\hline
			Sharpe ratio $\sqrt{N}SR$ &  $0.61$ &  $0.61$ &  $-0.64$ \\
			\hline
			\hline
		\end{tabular}
		\label{table: Descriptive Statistics_n=5}
	\end{table*}

	\begin{example}[Mid-Sized Portfolio: Thirty-Two Assets with Daily Price Data] \rm \label{example: Robustness check}
		Our theory is readily applied to a mid-sized (or large-sized) portfolio.
		As an example,  we consider a portfolio consisting of 32 assets involving a bank account, Dow 30 Stocks,\footnote{Dow 30 Stocks consist of the thirty stocks that make up the Dow Jones Industrial Average.} and the Bitcoin-to-USD exchange rate~(Ticker:~BTC-USD) over a one-year horizon from November 20, 2021 to November 20, 2022.\footnote{The data considered in this example is retrieved from Yahoo Finance. It is worth noting that the time period considered for this example is significant because the third-largest cryptocurrency exchange, FTX, declared bankruptcy on November 11, 2022, which had a significant  impact on cryptocurrency markets.} 
		
		The one-year data is divided into two parts: The first $90$ days are used for in-sample optimization and the remainder is used for out-of-sample testing. 
		Here, we consider different scenarios for the transaction costs: zero costs,~$0.01\%$,~$0.1\%$, and $0.5\%$ for  trading stocks, and zero costs and $0.1\%$ fees for trading cryptocurrency. 
		If investors retain their capital in the bank account, they earn daily interest with a rate $r_f :=1\%/365$.
		
		Fix $n=1$, i.e., the portfolio is rebalanced on a daily basis.
		When costs for trading stocks are~$0\%$, $0.01\%$ and $0.1\%$, we find that $K_{CVX}^* \approx \widehat{K}_{CVX}^*  \approx 1$.\footnote{Note that, in this example, Chevron Corporation (Ticker: CVX) is the dominant asset since the estimated dominance condition $\max_{1\leq i \leq 32,\; i \neq CVX} \frac{1}{N}\sum_{k=1}^{N}\frac{1+X_{i}(k)}{1+X_{CVX}(k)} = 0.998 < 1$ for all the assets in the portfolio except for CVX. 
			Hence, according to Lemma~\ref{lemma: dominance theorem with costs}, a log-optimal investor should invest all the available capital in this asset.} 
		However, when the proportional cost is~$0.5\%$, the approximate optimum becomes $K^*_{\rm Bank \; account} \approx \widehat{K}^*_{\rm Bank \; account} \approx 1$, indicating that it is optimal to hold all capital in the bank account.
		Table~\ref{table: Descriptive Statistics Daily cost (n=1)} summarizes the performance of the three trading strategies under different levels of costs for trading stocks and cryptocurrency.  
		As expected, higher costs result in a significant decrease in investor revenue. 
		The corresponding account value trajectories are plotted in Figure~\ref{fig: trading performance_Daily(n=1)}.

		Subsequently, we examine the effects of different rebalancing periods by setting the rebalancing period to every five days, i.e.,  $n=5$. 
		In this case, we find that $K_{CVX}^* \approx \widehat{K}_{CVX}^* \approx 1$ for all four different levels of costs ($0\%$, $0.01\%, 0.1\%$, and $0.5\%$) for trading stocks. 
		This is in contrast to the case with $n=1$, where the optimal weights dictated $K^*_{\rm bank \; account} \approx 1$ when the proportional cost was~$0.5\%$.
		Figure~\ref{fig: trading performance_Daily(n=5)} shows that the associated trading performance using $K^*$ and $ \widehat{K}^*$ are similar and outperforms the buy-and-hold strategy with equal weights $1/m$ over the given time period. 
		Table~\ref{table: Descriptive Statistics Daily cost (n=5)} provides a summary of the performance metrics under four different levels of costs with rebalancing periods~$n=5$.
		
		For an even longer rebalancing period, say $n=10$ and $n=30$, the optimal weight $K^*_{CVX}=1$ remains under proportional cost for stocks being $0.5\%$.
	\end{example}

	
	
	\begin{figure}[h!]
		\centering
		\includegraphics[width=.8\linewidth]{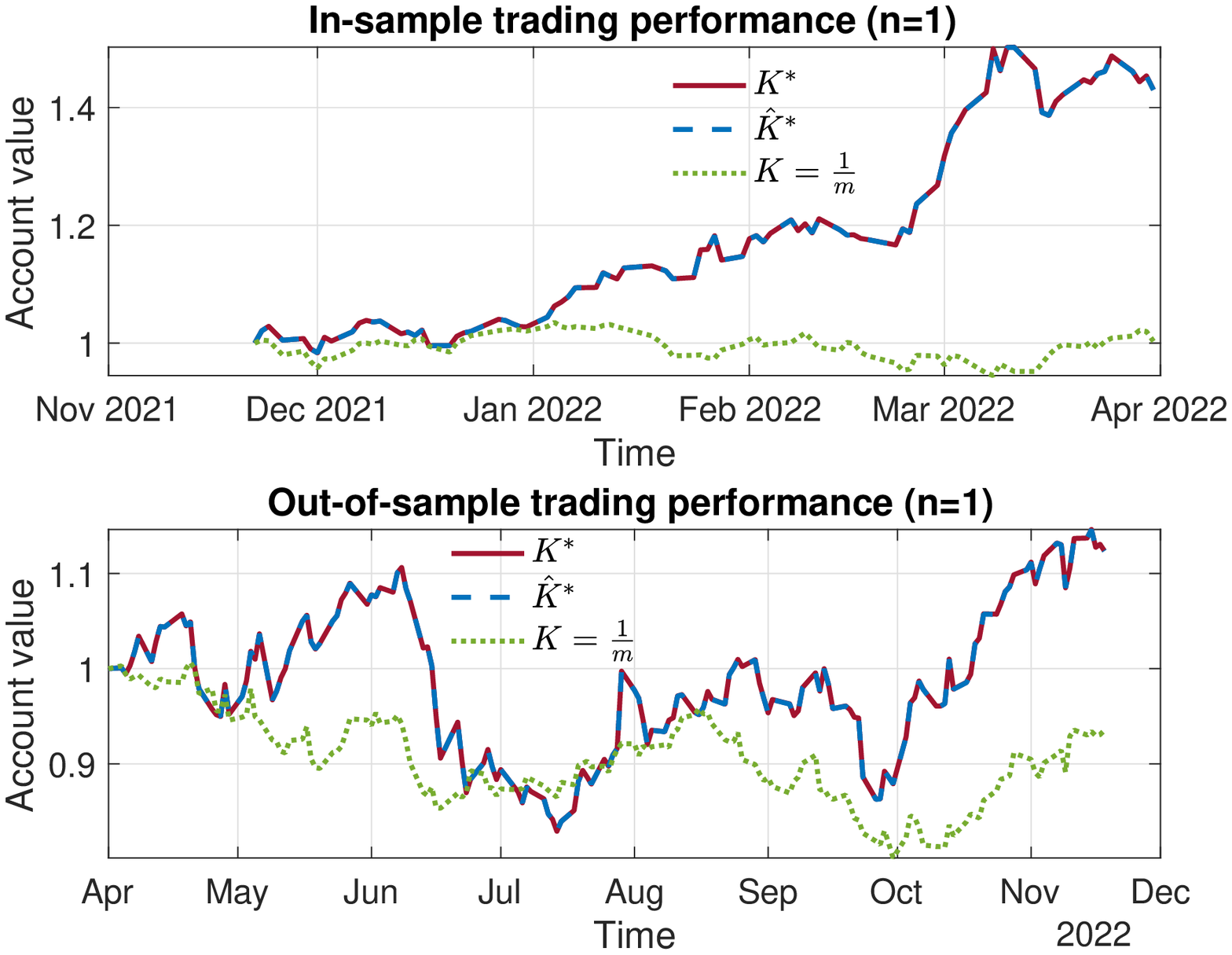}
		\caption{Account Value Trajectories under Three Trading Strategies (Optimal $K^*$, $\widehat{K}^*$, and Equally-Weighted $K=1/m$) with Rebalancing Period $n=1$ (Day) and Costs of $0.01\%$ for Stocks and $0.1\%$ for Cryptocurrency.}
		\label{fig: trading performance_Daily(n=1)}
	\end{figure}
	
	\begin{table*}[h!] 
		\centering
		\caption{Out-of-Sample Trading Performance with Zero Costs and Different Nonzero Costs for Stocks and 
			Cryptocurrency with Rebalancing Period $n=1$ (Day).} 
		\begin{tabular}{|c|c|c|c|}
			\hline
			Costs of $0\%$ for stocks and cryptocurrency & $K^*$ & $\widehat{K}^*$ & Buy and hold \\
			\hline
			Cumulative rate of return $\frac{V(N)-V(0)}{V(0)}$ $(\%)$ & $14.20$ & $14.18$ & $-6.47$ \\
			\hline
			Realized log-growth $\log\frac{V(N)}{V(0)}$ $(\%)$ & $13.28$ & $13.26$ & $-6.68$\\
			\hline
			Volatility $\sigma$ $(\%)$ & $2.20$ & $2.20$ & $1.37$\\
			\hline
			Maximum percentage drawdown $d^*$ $(\%)$ & $24.88$ & $24.89$ & $20.42$ \\
			\hline
			Sharpe ratio $\sqrt{N}SR$ &  $0.60$ &  $0.60$ &  $-0.33$ \\
			\hline
			\hline
			\hline
			Costs of $0.01\%$ for stocks and $0.1\%$ for cryptocurrency & $K^*$ & $\widehat{K}^*$ & Buy and hold \\
			\hline
			Cumulative rate of return $\frac{V(N)-V(0)}{V(0)}$ $(\%)$ & $12.39$ & $12.37$ & $-6.49$ \\
			\hline
			Realized log-growth $\log\frac{V(N)}{V(0)}$ $(\%)$ & $11.68$ & $11.66$ & $-6.71$\\
			\hline
			Volatility $\sigma$ $(\%)$ & $2.20$ & $2.20$ & $1.37$\\
			\hline
			Maximum percentage drawdown $d^*$ $(\%)$ & $25.06$ & $25.07$ & $20.42$ \\
			\hline
			Sharpe ratio $\sqrt{N}SR$ &  $0.54$ &  $0.54$ &  $-0.33$ \\
			\hline
			\hline
			\hline
			Costs of $0.1\%$ for stocks and $0.1\%$ for cryptocurrency  & $K^*$ & $\widehat{K}^*$ & Buy and hold \\
			\hline
			Cumulative rate of return $\frac{V(N)-V(0)}{V(0)}$ $(\%)$ & $-2.68$ & $-2.7$ & $-6.65$\\
			\hline
			Realized log-growth $\log\frac{V(N)}{V(0)}$ $(\%)$ & $-2.72$ & $-2.73$ & $-6.88$ \\
			\hline
			Volatility $\sigma$ $(\%)$ & $2.20$ & $2.20$ & $1.37$ \\
			\hline
			Maximum percentage drawdown $d^*$ $(\%)$ & $27.15$ & $27.17$ & $20.42$ \\
			\hline
			Sharpe ratio $\sqrt{N}SR$ &  $0.03$ &  $0.02$ &  $-0.34$ \\
			\hline
			\hline
			\hline
			Costs of $0.5\%$ for stocks and $0.1\%$ for cryptocurrency & $K^*$ & $\widehat{K}^*$ & Buy and hold \\
			\hline
			Cumulative rate of return $\frac{V(N)-V(0)}{V(0)}$ $(\%)$ & $0.22$ & $0.17$ & $-7.34$  \\
			\hline
			Realized log-growth $\log\frac{V(N)}{V(0)}$ $(\%)$ & $0.22$ & $0.17$ & $-7.63$\\
			\hline
			Volatility $\sigma$ $(\%)$ & $3.9\times10^{-5}$ & $4.7\times10^{-5}$ & $1.37$ \\
			\hline
			Maximum percentage drawdown $d^*$ $(\%)$  & $0.02$ & $0.03$ & $20.42$ \\
			\hline
			Sharpe ratio $\sqrt{N}SR$ &  $-4.45$ & $-4.55$ &  $-0.38$ \\
			\hline
			\hline
		\end{tabular}
		\label{table: Descriptive Statistics Daily cost (n=1)}
	\end{table*}
	
	
	\begin{figure}[h!]
		\centering
		\includegraphics[width=.8\linewidth]{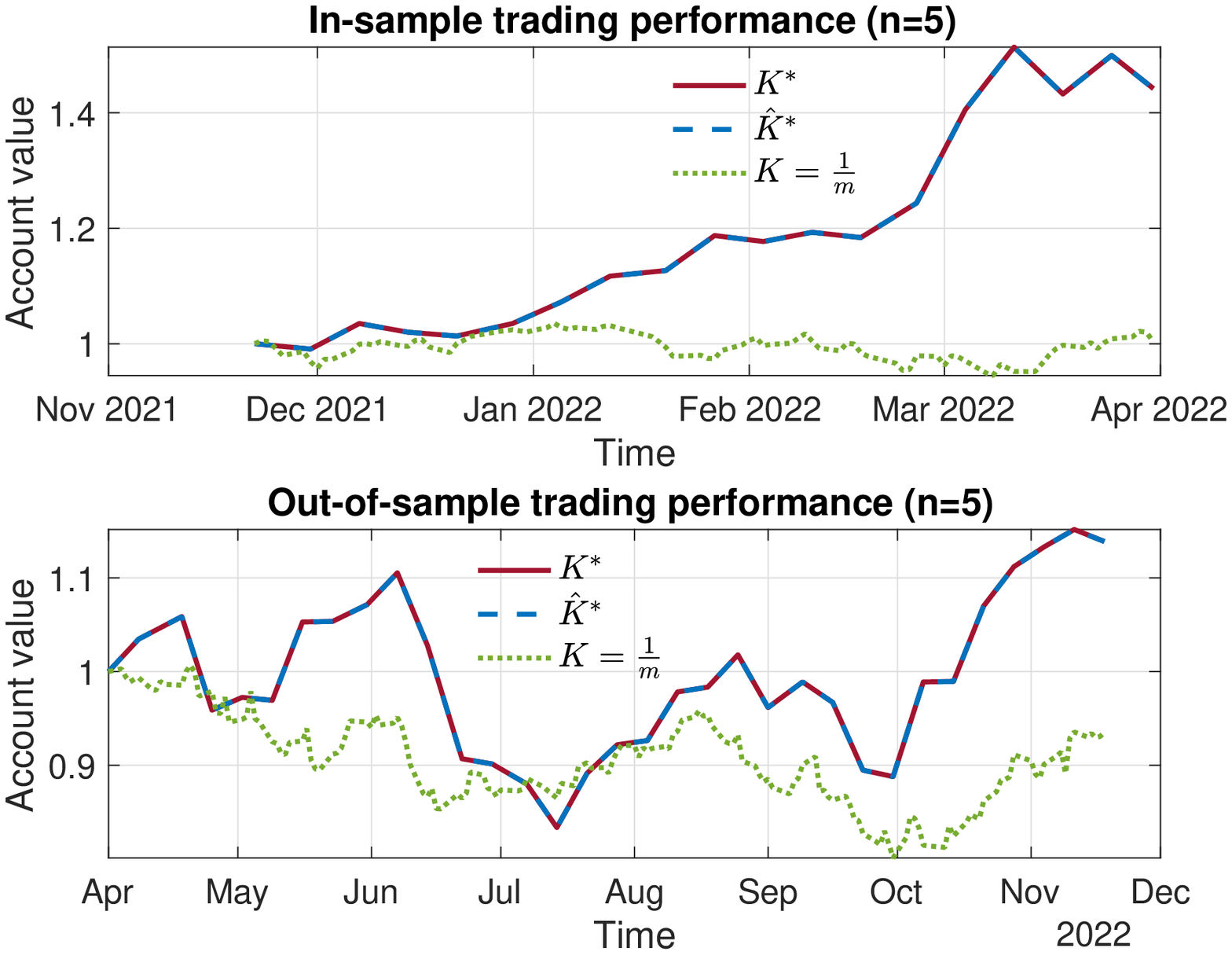}
		\caption{Account Value Trajectories under Three Trading Strategies (Optimal $K^*$, $\widehat{K}^*$, and Equally-Weighted $K=1/m$) with Rebalancing Period $n=5$ (Days)  and Costs of $0.01\%$ for Stocks and $0.1\%$ for Cryptocurrency..}
		\label{fig: trading performance_Daily(n=5)}
	\end{figure}
	
	\begin{table*}[h!] 
		\centering
		\caption{Out-of-Sample Trading Performance  with Zero Costs and Different Nonzero Costs for Stocks and
			Cryptocurrency with Rebalancing Period $n=5$ (Days).} 
		\begin{tabular}{|c|c|c|c|}
			\hline
			Costs of $0\%$ for stocks and cryptocurrency & $K^*$ & $\widehat{K}^*$ & Buy and hold \\
			\hline
			Cumulative rate of return $\frac{V(N)-V(0)}{V(0)}$ $(\%)$ & $14.25$ & $14.23$ & $-6.47$ \\
			\hline
			Realized log-growth $\log\frac{V(N)}{V(0)}$ $(\%)$ & $13.32$ & $13.31$ & $-6.68$\\
			\hline
			Volatility $\sigma$ $(\%)$ & $5.18$ & $5.18$ & $1.37$\\
			\hline
			Maximum percentage drawdown $d^*$ $(\%)$ & $24.55$ & $24.55$ & $20.42$ \\
			\hline
			Sharpe ratio $\sqrt{N}SR$ &  $0.61$ &  $0.60$ &  $-0.33$ \\
			\hline
			\hline
			\hline
			Costs of $0.01\%$ for stocks and $0.1\%$ for cryptocurrency & $K^*$ & $\widehat{K}^*$ & Buy and hold \\
			\hline
			Cumulative rate of return $\frac{V(N)-V(0)}{V(0)}$ $(\%)$ & $13.88$ & $13.87$ & $-6.49$ \\
			\hline
			Realized log-growth $\log\frac{V(N)}{V(0)}$ $(\%)$ & $13.00$ & $12.99$ & $-6.71$\\
			\hline
			Volatility $\sigma$ $(\%)$ & $5.18$ & $5.18$ & $1.37$\\
			\hline
			Maximum percentage drawdown $d^*$ $(\%)$ & $24.59$ & $24.59$ & $20.42$ \\
			\hline
			Sharpe ratio $\sqrt{N}SR$ &  $0.59$ &  $0.59$ &  $-0.33$ \\
			\hline
			\hline
			\hline
			Costs of $0.1\%$ for stocks and $0.1\%$ for cryptocurrency & $K^*$ & $\widehat{K}^*$ & Buy and hold \\
			\hline
			Cumulative rate of return $\frac{V(N)-V(0)}{V(0)}$ $(\%)$ & $10.66$ & $10.64$ & $-6.65$ \\
			\hline
			Realized log-growth $\log\frac{V(N)}{V(0)}$ $(\%)$ & $10.13$ & $10.12$ & $-6.88$ \\
			\hline
			Volatility $\sigma$ $(\%)$ & $5.18$ & $5.18$ & $1.37$ \\
			\hline
			Maximum percentage drawdown $d^*$ $(\%)$ & $24.95$ & $24.95$ & $20.42$ \\
			\hline
			Sharpe ratio $\sqrt{N}SR$ &  $0.49$ &  $0.49$ &  $-0.34$ \\
			\hline
			\hline
			\hline
			Costs of $0.5\%$ for stocks and $0.1\%$ for cryptocurrency &$K^*$ & $\widehat{K}^*$ & Buy and hold \\
			\hline
			Cumulative rate of return $\frac{V(N)-V(0)}{V(0)}$ $(\%)$ & $-2.64$ & $-2.65$ & $-7.34$  \\
			\hline
			Realized log-growth $\log\frac{V(N)}{V(0)}$ $(\%)$ & $-2.67$ & $-2.69$ & $-7.63$\\
			\hline
			Volatility $\sigma$ $(\%)$ & $5.18$ & $5.18$ & $1.37$ \\
			\hline
			Maximum percentage drawdown $d^*$ $(\%)$  & $26.53$ & $26.53$ & $20.42$ \\
			\hline
			Sharpe ratio $\sqrt{N}SR$ &  $0.05$ &  $0.05$ &  $-0.38$ \\
			\hline
			\hline
		\end{tabular}
		\label{table: Descriptive Statistics Daily cost (n=5)}
	\end{table*}

	\section{Online Trading with Sliding Window Approach} \label{section: online trading with sliding window approach}
	In previous sections, optimal weights $K^*$ and its approximation counterpart $\widehat{K}^*$ were obtained as fixed values based on the empirical distributions of returns, rather than true distribution, which is typically unknown to the investor in practice. 
	Moreover, these fixed weights cannot adapt to the constantly changing information in a dynamic market. 
	To address this issue, we apply a data-driven sliding window approach that generates time-varying log-optimal weights online; see also~\cite{wang2022data} for a similar idea for online trading.
	
	The idea of the sliding window approach is as follows. For $k=0,1,\dots$, the investor first declares a fixed window size $M \geq 1$. With $k=0,1, \dots, M-1$, one solves the log-optimal portfolio problem~(\ref{problem: simplified log-optimization problem}) to obtain $K^*$ or the approximation counterpart (\ref{problem: approximate log-optimal portfolio problem}) to obtain $\widehat{K}^*$. 
	These optimum weights are then applied in the next stage.
	Having done that, one re-solves the log-optimal portfolio problem again using the data from $k=1,2,\dots, M$. 
	Repeating this procedure until the end, one obtains a time-varying optimum $K^*(k)$ or $\widehat{K}^*(k).$ 
	This approach has a computational advantage because it solves a sequence of concave optimization problems rather than a stochastic dynamic programming problem.
	The details of this approach can be found in 
	Algorithm~\ref{algorithm: sliding window approach for log-optimum} below.

	\begin{algorithm}
		\caption{Online Trading via Sliding Window Approach}
		\begin{algorithmic}[1]
			\Require Consider $m \geq 2$ assets, realized returns $\{ X_i(k): k \geq 0\}$ and transaction cost $c_i$ for $i=1,2,\cdots,m$, and sliding window size $M \geq 1.$
			\Ensure Optimal portfolio weight $K^*$ or $\widehat{K}^*$ for each stage.
			\State Compute compound returns $\{ \widetilde{\mathcal{X}}_{n,i}(s) \}$ for each asset $i$ in the portfolio with $\widetilde{\mathcal{X}}_{n,i}(s) := \prod_{k=n(s-1)}^{ns-1}(1+X_i(k))-1-c_i$.
			
			
			\If{$k \geq M$} 
			\State Solve the maximization Problem~(\ref{problem: simplified log-optimization problem}) to obtain $K^*$ (or solve Problem~(\ref{problem: approximate log-optimal portfolio problem}) to obtain $\widehat{K}^*$).
			
			\State Having obtained optimal $K^*(k):= K^*$ (or $K^*(k):=\widehat{K}^*(k)$), we apply it at stage $k+1$. Set~$k:=k+1$ then back to Step $2$.
			\EndIf
		\end{algorithmic}
		\label{algorithm: sliding window approach for log-optimum}
	\end{algorithm}
	
	\bigskip
	\begin{example}[Mid-Sized Portfolio Revisited: Online Trading via the Sliding Window Approach] \rm
		To illustrate the sliding window approach, 
		we conduct additional empirical studies using the daily price data considered in Example~\ref{example: Robustness check} with the costs of $0.01\%$ for stocks and $0.1\%$ for cryptocurrency. 
		Here, we first fix the rebalancing period $n=1$ day and consider three different window sizes:~$M=10,20,30$ days.  
		By solving the log-optimal and approximate log-optimal portfolio problems, we obtain the resulting time-varying optimal weights~$K^*(k)$ and the approximate log-optimum~$\widehat{K}^*(k)$ for $ k=1,2, \dots$, see Figure~\ref{fig: Sliding Window Weight (n=1)} for an illustration. 
		The associated account value trajectories of three portfolios with different weights ($\widehat{K}^*$, $K^*$, and an equally-weights~$K = 1/m$) are depicted in Figure~\ref{fig: Sliding Window(n=1)}.
		See also Table~\ref{table: Descriptive Statistics Daily cost sliding window 0.01 (n=1)} for a summary of the trading performance metrics under three different window sizes $M$. 
		It is interesting to note that the portfolios with weights $K^*$ and $\widehat{K}^*$ using the window size $M=30$ outperform the buy-and-hold strategy in terms of Sharpe ratio.
		This observation suggests that the window size $M$ may be an important factor in determining the overall trading performance. 
		While this point is not pursued further in this paper, it is worth considering in future work when implementing the sliding window approach in practice.
		
		Likewise, we also study the performance with different rebalancing periods $n=5$ and $n=10$ and with different window sizes $M=10, 20,$ and $ 30$. 
		These results are summarized in Tables~\ref{table: Descriptive Statistics Daily cost sliding window 0.01 (n=5)} and~\ref{table: Descriptive Statistics Daily cost sliding window 0.01 (n=10)}. 
		Similar to the $n=1$ case, we see that for both $n=5$ and $n=10$, the best performance is obtained with $M=30$ and $M=20$, respectively in this example.

		\begin{figure}[h!]
			\centering
			\begin{subfigure}[b]{0.95\textwidth}
				\centering
				\includegraphics[width=0.95\textwidth]{"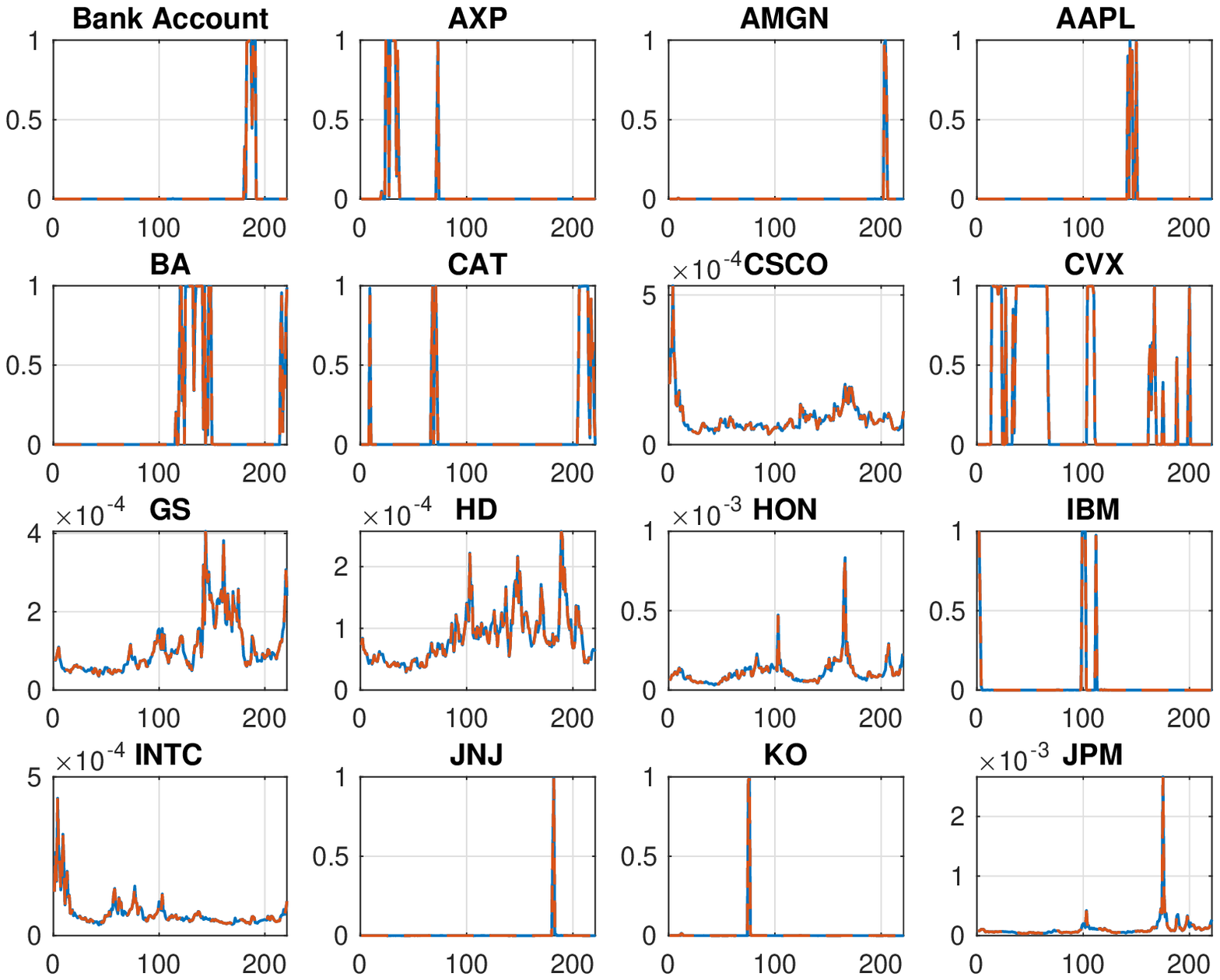"}
			\end{subfigure}
			\hfill
			\begin{subfigure}[b]{0.95\textwidth}
				\centering
				\includegraphics[width=0.95\textwidth]{"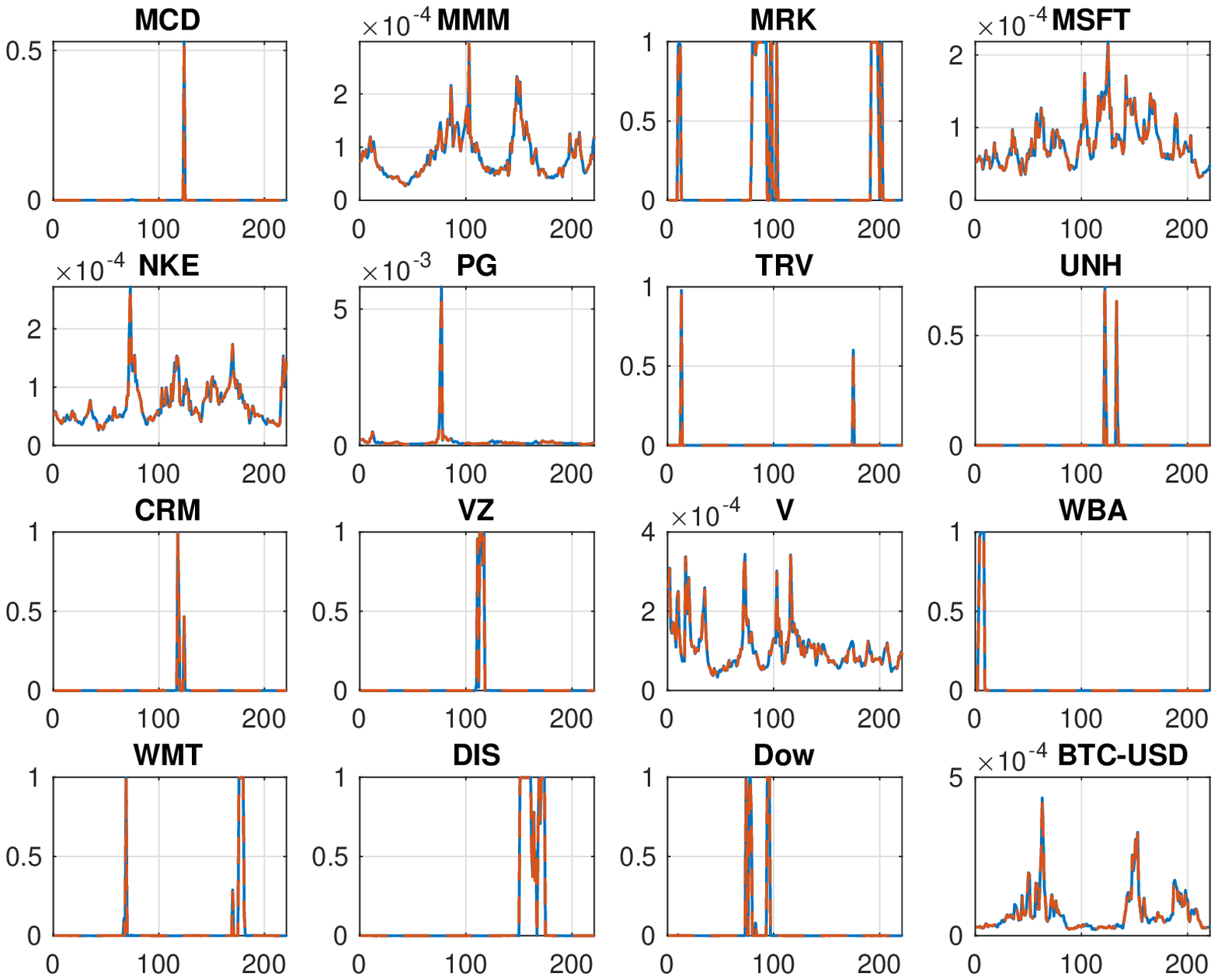"}
			\end{subfigure}
			\hfill
			\caption{Time-Varying Portfolio Weight $K^*(k)$ (red dash line) and $\widehat{K}^*(k)$ (blue solid line) with Window Size $M=30$ (Days) and Rebalancing Period $n=1$ (Day).}
			\label{fig: Sliding Window Weight (n=1)}
		\end{figure}
		
		
		
		
		
		\begin{figure}[h!]
			\centering
			\includegraphics[width=.8\linewidth]{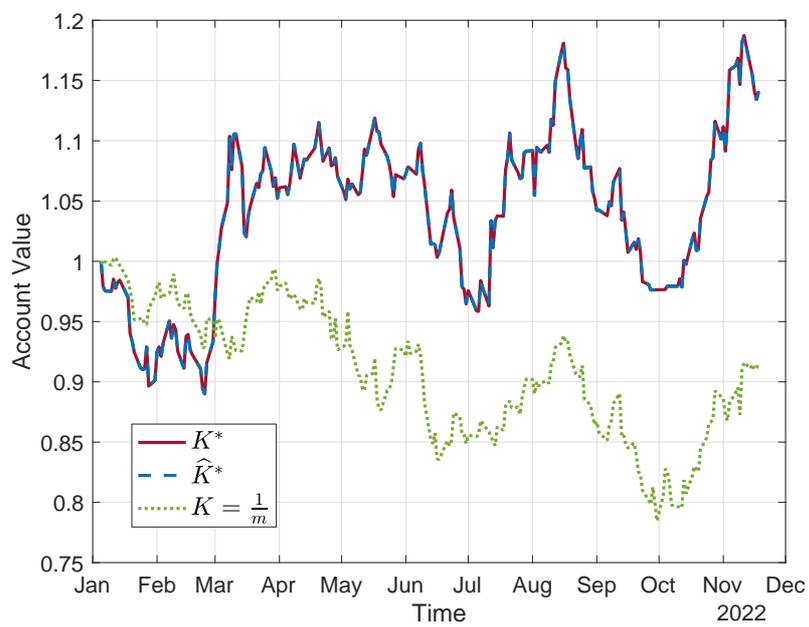}
			\caption{Equally Weighted Portfolio Versus Sliding Window Approach with Window Size $M=30$ (Days) and Rebalancing Period $n=1$ (Day).}
			\label{fig: Sliding Window(n=1)}
		\end{figure}
		
		\begin{table*}[h!] 
			\centering
			\caption{Online Trading Performance Using Sliding Window Approach with Transaction Costs of~$0.01\%$ for Stocks and $0.1\%$ for Cryptocurrency with rebalancing period $n=1$ (Day).} 
			\begin{tabular}{|c|c|c|c|}
				\hline
				$M=10$ & $K^*$ & $\widehat{K}^*$ & Buy and hold \\
				\hline
				Cumulative rate of return $\frac{V(N)-V(0)}{V(0)}$ $(\%)$ & $-22.88$ & $-23.11$ & $-5.78$ \\
				\hline
				Realized log-growth $\log\frac{V(N)}{V(0)}$ $(\%)$ & $-25.99$ & $-26.28$ & $-5.95$\\
				\hline
				Volatility $\sigma$ $(\%)$ & $2.16 $& $2.16$ & $1.24$\\
				\hline
				Maximum percentage drawdown $d^*$ $(\%)$ & $42.19$ & $42.35$ & $22.08$\\
				\hline
				Sharpe ratio $\sqrt{N}SR$ &  $-0.63$ &  $-0.62$ & $-0.25$\\
				\hline
				\hline
				\hline
				$M=20$ & $K^*$ & $\widehat{K}^*$ & Buy and hold \\
				\hline
				Cumulative rate of return $\frac{V(N)-V(0)}{V(0)}$ $(\%)$ & $-11.74$ & $-11.65$ & $-5.89$ \\
				\hline
				Realized log-growth $\log\frac{V(N)}{V(0)}$ $(\%)$ & $-12.48$ & $-12.38$ & $-6.07$\\
				\hline
				Volatility $\sigma$ $(\%)$ & $1.86$ & $1.86$ & $1.26$\\
				\hline
				Maximum percentage drawdown $d^*$ $(\%)$ & $37.85$ & $37.80$ & $22.08$\\
				\hline
				Sharpe ratio $\sqrt{N}SR$ & $ -0.33$ & $-0.30$ &  $-0.26$ \\
				\hline
				\hline
				\hline
				$M=30$ & $K^*$ & $\widehat{K}^*$ & Buy and hold \\
				\hline
				Cumulative rate of return $\frac{V(N)-V(0)}{V(0)}$ $(\%)$ & $14.12$ & $14.04$ & $-8.42$ \\
				\hline
				Realized log-growth $\log\frac{V(N)}{V(0)}$ $(\%)$ & $13.21$ & $13.14$ & $-8.79$\\
				\hline
				Volatility $\sigma$ $(\%)$ & $1.71$ & $1.71$ & $1.28$\\
				\hline
				Maximum percentage drawdown $d^*$ $(\%)$ & $17.33$ & $17.33$ & $21.80$\\
				\hline
				Sharpe ratio $\sqrt{N}SR$ &  $0.62$ &  $0.64$ &  $-0.40$ \\
				\hline
			\end{tabular}
			\label{table: Descriptive Statistics Daily cost sliding window 0.01 (n=1)}
		\end{table*}
		
		\begin{table*}[h!] 
			\centering
			\caption{Online Trading Performance Using Sliding Window Approach with Transaction Costs of~$0.01\%$ for Stocks and $0.1\%$ for Cryptocurrency with Rebalancing Period $n=5$ (Days).} 
			\begin{tabular}{|c|c|c|c|}
				\hline
				$M=10$& $K^*$ & $\widehat{K}^*$ & Buy and hold \\
				\hline
				Cumulative rate of return $\frac{V(N)-V(0)}{V(0)}$ $(\%)$ & $-2.09$ & $-2.28$ & $-7.45$ \\
				\hline
				Realized log-growth $\log\frac{V(N)}{V(0)}$ $(\%)$ & $-2.11$ & $-2.31$ & $-7.75$\\
				\hline
				Volatility $\sigma$ $(\%)$ & $4.62 $& $4.61$ & $1.33$\\
				\hline
				Maximum percentage drawdown $d^*$ $(\%)$ & $30.90$ & $31.09$ & $21.03$\\
				\hline
				Sharpe ratio $\sqrt{N}SR$ &  $0.07$ &  $0.06$ & $ -0.35 $\\
				\hline
				\hline
				\hline
				$M=20$ & $K^*$ & $\widehat{K}^*$ & Buy and hold \\
				\hline
				Cumulative rate of return $\frac{V(N)-V(0)}{V(0)}$ $(\%)$ & $-9.55$ & $-9.56$ & $-3.97$ \\
				\hline
				Realized log-growth $\log\frac{V(N)}{V(0)}$ $(\%)$ & $-10.03$ & $-10.05$ & $-4.05$\\
				\hline
				Volatility $\sigma$ $(\%)$ & $4.55$ & $4.55$ & $1.40$\\
				\hline
				Maximum percentage drawdown $d^*$ $(\%)$ & $27.58$ & $27.59$ & $18.17$\\
				\hline
				Sharpe ratio $\sqrt{N}SR$ & $ -0.29$ & $-0.29$ &  $-0.18$ \\
				\hline
				\hline
				\hline
				$M=30$ &$K^*$ & $\widehat{K}^*$ & Buy and hold \\
				\hline
				Cumulative rate of return $\frac{V(N)-V(0)}{V(0)}$ $(\%)$ & $3.50$ & $3.64$ & $6.96$ \\
				\hline
				Realized log-growth $\log\frac{V(N)}{V(0)}$ $(\%)$ & $3.44$ & $3.57$ & $6.73$\\
				\hline
				Volatility $\sigma$ $(\%)$ & $3.42$ & $3.42$ & $1.33$\\
				\hline
				Maximum percentage drawdown $d^*$ $(\%)$ & $10.54$ & $10.52$ & $16.39$\\
				\hline
				Sharpe ratio $\sqrt{N}SR$ & $0.30$ &  $0.31$ &  $0.56$ \\
				\hline
				\hline
			\end{tabular}
			\label{table: Descriptive Statistics Daily cost sliding window 0.01 (n=5)}
		\end{table*}
		
		\begin{table*}[h!] 
			\centering
			\caption{Online Trading Performance Using Sliding Window Approach with Transaction Costs of~$0.01\%$ for Stocks and $0.1\%$ for Cryptocurrency with Rebalancing Period $n=10$ (Days).} 
			\begin{tabular}{|c|c|c|c|}
				\hline
				$M=10$& $K^*$ & $\widehat{K}^*$ & Buy and hold \\
				\hline
				Cumulative rate of return $\frac{V(N)-V(0)}{V(0)}$ $(\%)$ & $-9.43$ & $-9.12$ & $-1.15$ \\
				\hline
				Realized log-growth $\log\frac{V(N)}{V(0)}$ $(\%)$ & $-9.91$ & $-9.56$ & $-1.16$\\
				\hline
				Volatility $\sigma$ $(\%)$ & $6.31$ & $6.34$ & $1.39$\\
				\hline
				Maximum percentage drawdown $d^*$ $(\%)$ & $27.56$ & $27.56$ & $18.15$\\
				\hline
				Sharpe ratio $\sqrt{N}SR$ &  $-0.30$ &  $-0.28$ &  $-0.01$ \\
				\hline
				\hline
				\hline
				$M=20$& $K^*$ & $\widehat{K}^*$ & Buy and hold \\
				\hline
				Cumulative rate of return $\frac{V(N)-V(0)}{V(0)}$ $(\%)$ & $11.32$ & $11.19$ & $11.20$ \\
				\hline
				Realized log-growth $\log\frac{V(N)}{V(0)}$ $(\%)$ & $10.73$ & $10.60$ & $10.61$\\
				\hline
				Volatility $\sigma$ $(\%)$ & $7.26$ & $7.24$ & $1.53$\\
				\hline
				Maximum percentage drawdown $d^*$ $(\%)$ & $9.51$ & $9.51$ & $4.74$\\
				\hline
				Sharpe ratio $\sqrt{N}SR$ &  $0.80$ & $0.79$ & $1.13$ \\
				\hline
				\hline
			\end{tabular}
			\label{table: Descriptive Statistics Daily cost sliding window 0.01 (n=10)}
		\end{table*}
		
	\end{example}

	\section{Concluding Remarks}\label{section: Concluding remark}
	This paper focuses on incorporating rebalancing frequency and transaction costs into the log-optimal portfolio formulation, which aims to maximize the expected logarithmic growth rate of an investor's wealth. 
	We demonstrate that solving a frequency-dependent optimization problem with costs is equivalent to solving a concave program. Conditions under which a log-optimal investor would invest all available funds in a specific asset are provided. 
	We also consider the issue of bankruptcy that can arise due to transaction costs in the frequency-dependent formulation and propose an approximate solution using a quadratic concave program.
	Additionally, a version of the two-fund theorem is proven, demonstrating that a convex combination of two optimal weights is still optimal.
	We present various empirical studies to explore the effect of considering percentage transaction cost and rebalancing periods from the small to mid-sized portfolio optimization problems. 
	Lastly, we extend our empirical studies to an online trading scenario by implementing a sliding window approach, which allows us to solve a sequence of concave programs rather than a complex stochastic dynamic programming problem.

	Regarding further research, one possible continuation is to consider additional practical trading issues; e.g., allowing to short an asset, i.e., $K_i < 0$ for some $i$ and/or modeling the impact of dividend/taxes. 
	Another feasible direction is to incorporate an \textit{risk} term into the objective function for the ELG maximization problem, which would mitigate the situation when the optimum suggests betting all capital on a specific asset; e.g., see \cite{davis2008risk}.
	Another important consideration is the potential for estimation error in the distribution of returns, which is often unknown and must be estimated in practice.
	In this case, it may be useful to study the robust counterpart of the problem considered in this paper.  That is, instead of solving $\sup_K \mathbb{E}[\log \frac{V_K(N)}{V(0)}]$, one seeks to solve a data-driven distributional robust log-optimal portfolio problem  
	$$
	\sup_{K \in \mathcal{K}} \inf_{P \in \mathcal{P}} \mathbb{E}^P \left[ \log \frac{V_K(N)}{V(0)} \right] 
	$$  where $\cal P$ is the \textit{ambiguity} set of probability distribution; e.g., see \cite{mohajerin2018data, wu2022generalization} for an approach using \textit{Wasserstein} metric to characterize the ambiguity~set.

	\bibliographystyle{apalike}
	\bibliography{refs}    
	
	\appendix
	\section{Technical Lemma on Survival Trades}

	\begin{lemma} \label{lemma: second surivial lemma}
		Let $\min_i c_i >0.$
		If $P(V(n) >0)=1$, then $\sum_{i=1}^m K_i ((1+\mu_i)^n - c_i) \geq 0.$
	\end{lemma}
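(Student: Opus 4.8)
The plan is to leverage the account-value identity already established in the proof of Lemma~\ref{lemma: ELG optimization as a concave program}, namely $V(n) = (1 + K^T \widetilde{\mathcal{X}}_n)V(0)$, and to convert the almost-sure positivity of $V(n)$ into an inequality on expectations. First I would note that since $V(0) > 0$, the hypothesis $P(V(n) > 0) = 1$ is equivalent to $P(1 + K^T \widetilde{\mathcal{X}}_n > 0) = 1$. Expanding the inner product via $\widetilde{\mathcal{X}}_{n,i} = \mathcal{X}_{n,i} - c_i$ and $\mathcal{X}_{n,i} = \prod_{k=0}^{n-1}(1 + X_i(k)) - 1$, and using the simplex constraint $\sum_{i=1}^m K_i = 1$, produces the pointwise identity $1 + K^T \widetilde{\mathcal{X}}_n = \sum_{i=1}^m K_i \bigl( \prod_{k=0}^{n-1}(1 + X_i(k)) - c_i \bigr)$, exactly as in Equation~(\ref{eq: prob of survival trades}). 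This quantity is therefore positive with probability one.

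Next I would take expectations. Because a random variable that is almost surely positive has nonnegative expectation, it follows that $\mathbb{E}\bigl[ \sum_{i=1}^m K_i ( \prod_{k=0}^{n-1}(1 + X_i(k)) - c_i ) \bigr] \geq 0$. By linearity, this equals $\sum_{i=1}^m K_i \bigl( \mathbb{E}[ \prod_{k=0}^{n-1}(1 + X_i(k)) ] - c_i \bigr)$. The key step is the evaluation of the product expectation: invoking the standing i.i.d.\ assumption on the per-period returns $\{X_i(k)\}_k$, the factors $1 + X_i(k)$ are mutually independent across $k$, so $\mathbb{E}[ \prod_{k=0}^{n-1}(1 + X_i(k)) ] = \prod_{k=0}^{n-1}(1 + \mathbb{E}[X_i(k)]) = (1 + \mu_i)^n$ with $\mu_i := \mathbb{E}[X_i(k)]$.

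Substituting this back yields $\sum_{i=1}^m K_i \bigl( (1+\mu_i)^n - c_i \bigr) \geq 0$, which is precisely the claim. I do not anticipate a serious obstacle: the only delicate point is the factorization of the expectation of the product, justified by independence together with the boundedness $X_{\min,i} \leq X_i(k) \leq X_{\max,i}$, which guarantees integrability of each factor. I would also remark that the hypothesis $\min_i c_i > 0$ is not in fact consumed in this direction of the implication; it appears in order to frame the result as a partial converse of Lemma~\ref{lemma: survival trades under costs}, thereby exposing the gap between the sufficient and necessary conditions for survival trades.
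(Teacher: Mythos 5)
Your proof is correct, and it reaches the same conclusion as the paper by a slightly more elementary route. The paper rewrites the event $\{V(n)>0\}$ as $\bigl\{\sum_{i=1}^m K_i \prod_{k=0}^{n-1}(1+X_i(k)) > \sum_{i=1}^m K_i c_i\bigr\}$ and then applies Markov's inequality, bounding $1 = P(\cdot) \leq \mathbb{E}\bigl[\sum_i K_i \prod_k(1+X_i(k))\bigr] / \sum_i K_i c_i$; this is where the hypothesis $\min_i c_i > 0$ is consumed, since one must divide by $\sum_i K_i c_i$. You instead take the expectation of the almost-surely positive variable $1+K^T\widetilde{\mathcal{X}}_n$ directly, which sidesteps Markov's inequality entirely and, as you correctly observe, renders the hypothesis $\min_i c_i>0$ unnecessary for this direction of the implication. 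Both arguments hinge on the same key evaluation $\mathbb{E}\bigl[\prod_{k=0}^{n-1}(1+X_i(k))\bigr] = (1+\mu_i)^n$, justified by the i.i.d.\ assumption on the return vectors (and integrability from the boundedness of returns), exactly as in the paper's last step. The two proofs are therefore logically equivalent in substance; yours is marginally cleaner and exposes that the standing hypothesis is there only to frame the lemma as a partial converse of Lemma~\ref{lemma: survival trades under costs}.
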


	\begin{proof} 
		Fix $\min_i c_i>0$ and assume that $P(V(n)>0) = 1$. Then
		\begin{align*}
			1 = P(V(n)>0) 
			&= P\left( \sum_{i=1}^m K_i\left(   \prod_{k=0}^{n-1} (1+ X_i(k))  \right) > \sum_{i=1}^m K_i  c_i    \right).
		\end{align*}
		Since $\sum_{i=1}^m K_i  c_i >0$ and $\sum_{i=1}^m K_i\left(   \prod_{k=0}^{n-1} (1+ X_i(k))  \right) \geq 0$, applying Markov inequality yields
		\begin{align*}
			1 = P\left( \sum_{i=1}^m K_i\left(   \prod_{k=0}^{n-1} (1+ X_i(k))  \right) > \sum_{i=1}^m K_i  c_i    \right) 
			&\leq \frac{ \mathbb{E}\left[ \sum_{i=1}^m K_i\left(   \prod_{k=0}^{n-1} (1+ X_i(k)) \right) \right]}{\sum_{i=1}^m K_i c_i} \\
			&= \frac{ \sum_{i=1}^m K_i (1+ \mu_i)^n }{ \sum_{i=1}^m K_i c_i}.
		\end{align*}
		where the last equality holds since the returns $\{X_i(k): k \geq 0\}$ are i.i.d.
		Hence, by rearranging the inequality above, we obtain
		$
		\sum_{i=1}^m K_i  ((1+\mu_i)^n -c_i ) \geq 0,
		$ which is desired.
	\end{proof}
\end{document}